\begin{document}

\title{Elastic Routing in Wireless Ad Hoc Networks With Directional Antennas}
\author{\large Jangho Yoon, \emph{Member}, \emph{IEEE}, Won-Yong Shin, \emph{Member}, \emph{IEEE}, \\ and Sang-Woon Jeon, \emph{Member}, \emph{IEEE}
\\
\thanks{This research was supported by the Basic Science Research
Program through the National Research Foundation of Korea (NRF)
funded by the Ministry of Education (2014R1A1A2054577). The
material in this paper was presented in part at the IEEE
International Symposium on Information Theory, Honolulu, HI,
June/July 2014~\cite{YoonShinJeon:13}.}
\thanks{J. Yoon is with
the Attached Institute of ETRI, Daejeon 305-390, Republic of
Korea. (E-mail: yunjh@kaist.ac.kr).}
\thanks{W.-Y. Shin (corresponding author) is with the Department of Computer Science and
Engineering, Dankook University, Yongin 448-701, Republic of Korea
(E-mail: wyshin@dankook.ac.kr).}
\thanks{S.-W. Jeon is
with the Department of Information and Communication Engineering,
Andong National University, Andong 760-749, Republic of Korea
(E-mail: swjeon@anu.ac.kr).}
} \maketitle




\newtheorem{definition}{Definition}
\newtheorem{theorem}{Theorem}
\newtheorem{lemma}{Lemma}
\newtheorem{example}{Example}
\newtheorem{corollary}{Corollary}
\newtheorem{proposition}{Proposition}
\newtheorem{conjecture}{Conjecture}
\newtheorem{remark}{Remark}

\def \diag{\operatornamewithlimits{diag}}
\def \min{\operatornamewithlimits{min}}
\def \max{\operatornamewithlimits{max}}
\def \log{\operatorname{log}}
\def \max{\operatorname{max}}
\def \rank{\operatorname{rank}}
\def \out{\operatorname{out}}
\def \exp{\operatorname{exp}}
\def \arg{\operatorname{arg}}
\def \E{\operatorname{E}}
\def \tr{\operatorname{tr}}
\def \SNR{\operatorname{SNR}}
\def \dB{\operatorname{dB}}
\def \ln{\operatorname{ln}}

\def \bmat{ \begin{bmatrix} }
\def \emat{ \end{bmatrix} }

\def \be {\begin{eqnarray}}
\def \ee {\end{eqnarray}}
\def \ben {\begin{eqnarray*}}
\def \een {\end{eqnarray*}}

\begin{abstract}
Throughput scaling laws of an {\em ad hoc} network equipping {\em
directional} antennas at each node are analyzed. More
specifically, this paper considers a general framework in which
the beam width of each node can scale at an arbitrary rate
relative to the number of nodes. We introduce an {\em elastic
routing} protocol, which enables to increase per-hop distance
elastically according to the beam width, while maintaining an
average signal-to-interference-and-noise ratio at each receiver as
a constant. We then identify fundamental operating regimes
characterized according to the beam width scaling and analyze
throughput scaling laws for each of the regimes. The elastic
routing is shown to achieve a much better throughput scaling law
than that of the conventional nearest-neighbor multihop for all
operating regimes. The gain comes from the fact that more
source--destination pairs can be simultaneously activated as the
beam width becomes narrower, which eventually leads to a linear
throughput scaling law. In addition, our framework is applied to a
hybrid network consisting of both wireless ad hoc nodes and
infrastructure nodes. As a result, in the hybrid network, we
analyze a further improved throughput scaling law and identify the
operating regime where the use of directional antennas is
beneficial.
\end{abstract}

\begin{keywords}
Ad hoc network, beam width, directional antenna, elastic routing,
hybrid network, multihop routing, throughput scaling law.
\end{keywords}

\newpage

\section{Introduction}



As the number of devices explosively increases for
machine-to-machine (M2M) communications in the era of the internet
of things (IoT), characterizing the aggregate throughput of {\em
large-scale} wireless networks becomes more crucial in developing
transmission protocols efficiently delivering a number of packets.
While numerical results via computer simulations depend heavily on
specific operating parameters for a given system or protocol, a
study on the capacity scaling of large-scale networks with respect
to the number of nodes provides a fundamental limit on the network
throughput. Hence, one can obtain remarkable insights into the
practical design of a protocol by characterizing the capacity
scaling law.

\subsection{Related Work}

In~\cite{GuptaKumar:00}, throughput scaling was originally
introduced and characterized in a large-scale wireless \emph{ad
hoc} network. It was shown that, for a network having $n$ nodes
randomly distributed in an unit area (i.e., a dense network), the
aggregate throughput scales as $\Omega(\sqrt{n/\log n})$ by
conveying packets in the nearest-neighbor multihop routing
fashion.\footnote{We use the following notation: i) $f(x)=O(g(x))$
means that there exist constants $C$ and $c$ such that $f(x)\leq
Cg(x)$ for all $x>c$, ii) $f(x)=o(g(x))$ means that
$\lim_{x\rightarrow \infty}\frac{f(x)}{g(x)}=0$, iii)
$f(x)=\Omega(g(x))$ if $g(x)=O(f(x))$, and iv) $f(x)=\Theta(g(x))$
if $f(x)=O(g(x))$ and $g(x)=O(f(x))$.} There have been further
studies based on multihop routing in the
literature~\cite{GuptaKumar:03,DousseFranceschettiThiran:06,XueXieKumar:05,ElGamalMammenPrabhakarShah:06,ElGamalMammen:06,FranceschettiDouseTseThiran:07,ShinChungLee:13},
while the total throughput scales far less than $\Theta(n)$.
In~\cite{OzgurLevequeTse:07}, the aggregate throughput in the
dense network was improved to an almost linear scaling, i.e.,
$\Theta(n^{1-\epsilon})$ for an arbitrarily small $\epsilon>0$, by
using a hierarchical cooperation strategy. Besides the
hierarchical cooperation
scheme~\cite{OzgurLevequeTse:07,NiesenGuptaShah:09,NiesenGuptaShah:10,JeonGastpar:14},
there has also been a steady push to improve the throughput of
interference-limited networks up to a linear scaling by using node
mobility~\cite{GrossglauserTse:02,ElGamalMammenPrabhakarShah:06},
interference alignment~\cite{CadambeJafar:08}, infrastructure
support~\cite{KozatTassiulas:03,ZemlianovVeciana:05,LiuLiuTowsley:03,LiuThiranTowsley:07,ShinJeonDevroyeVuChungLeeTarokh:08,LiFang:10,JeonChung:12},
and directional antennas~\cite{YiPeiKalyanaraman:03,ZhangLiew:06}.

To achieve such a linear scaling, there will be a price to pay in
terms of
delay~\cite{OzgurLevequeTse:07,GrossglauserTse:02,ElGamalMammenPrabhakarShah:06},
cost of channel
estimation~\cite{OzgurLevequeTse:07,CadambeJafar:08}, and
infrastructure
investment~\cite{ZemlianovVeciana:05,ShinJeonDevroyeVuChungLeeTarokh:08}.
On the one hand, the use of {\em directional}
antennas~\cite{YiPeiKalyanaraman:03,ZhangLiew:06,Ramanathan_etal:05}
in ad hoc networks has recently emerged as a promising technology
leading to the enhanced spatial reuse, the improved transmission
distance, and the reduced interference level with relatively low
cost in comparison to alternative technologies. Especially, for
wireless systems using millimeter wave (mmWave) technologies
operating in the 10-300 GHz band, which have been considered as
one solution to enable gigabit-per-second data rates, equipping
directional antennas at each node may be more challenging. This is
because mmWave links are inherently directional and thus steerable
antenna arrays can be easily implemented, thus resulting in a much
higher link gain \cite{ThornbugBaiHeath:14}. Due to these reasons,
the interest in
 studies of more amenable networks using directional antennas has been greatly growing.
%
In the literature, the previous work based on the {\em protocol
model}~\cite{GuptaKumar:00} has shown that, for an infinitely
large antenna gain (or equivalently, for an infinitely small beam
width), the use of directional antennas provides a substantial
throughput enhancement up to a linear
scaling~\cite{LiZhangFang:11}. Similarly, the throughput scaling
law was studied based on an {\em interference model} for
directional antennas~\cite{ZhangLiew:06}. There have also been
other research directions showing the capacity scaling when
directional antennas are used under different assumptions as well
as different
situations~\cite{PerakiServetto:03,ZhangXuWangGuizani:10}. It is,
however, still unclear how much throughput scaling gain is
attainable by directional antennas under a more realistic wireless
channel. More precisely, for {\em all beam width scaling}
conditions, the analysis so far would not be suitable for fully
understanding the effects of directional antennas on the
throughput scaling since it does not reflect the physically
attainable antenna gain that may be exploited to enhance the
throughput scaling in the presence of interference.

\subsection{Contributions}

In this paper, when there are $n$ randomly located nodes in a
network equipping directional antennas at each node, we deal with
a general framework in which the beam width of each node,
$\theta$, scales at an {\em arbitrary rate} with respect to $n$,
which provides a comprehensive understanding on fundamental limits
of directional antennas for wireless ad hoc networks. Similarly as
in~\cite{YiPeiKalyanaraman:03,LiZhangFang:11}, we take into
account a simplified but feasible directional antenna model having
mainlobe and sidelobe gains. Then, by completely utilizing the
characteristics of directional antennas, we introduce a new
routing, termed {\em elastic routing}, and analyze its throughput
scaling laws. The proposed routing protocol enables to increase
the average transmission distance at each hop {\em elastically}
according to the beam width, while setting the average
signal-to-interference-and-noise ratio (SINR) at each receiver to
a constant independent of $n$. We identify two fundamental {\em
operating regimes} characterized according to the $\theta$ scaling
and analyze throughput scaling laws for each of the regimes. Our
main results demonstrate that the proposed elastic routing
achieves a much better throughput scaling law compared to the
conventional nearest-neighbor multihop routing for all operating
regimes. The gain comes from the fact that more
source--destination (SD) pairs can be activated simultaneously as
the beam width $\theta$ becomes narrower, which eventually
provides up to a linear throughput scaling.
Interestingly, it is further shown that the average delay is
reduced by the proposed elastic routing while achieving an
improved throughput scaling, which is in sharp contrast with the
omnidirectional mode (i.e.,
$\theta=\Theta(1)$)~\cite{ElGamalMammenPrabhakarShah:06}.

In addition, our result is extended to a hybrid network scenario.
Since there will be a long latency and insufficient energy with
only wireless connectivity, it would be good to deploy
infrastructure nodes, or equivalently base stations (BSs) in the
network, while possibly improving the throughput scaling. In a
hybrid network equpping directional antennas at each node, we
analyze the impact and benefits of our elastic routing in further
improving the throughput scaling.

%
%
%

 Our comprehensive analysis sheds more light on the
routing policies and on operational issues for ad hoc networks in
the directional mode.

\subsection{Organization}

The rest of this paper is organized as follows. The system model
is described in Section \ref{SEC:System}. In Section
\ref{SEC:routing}, the elastic routing protocol is presented. In
Section \ref{SEC:Throughput}, the throughput scaling laws are
derived. In Section \ref{SEC:DISCUSS}, our result is extended to a
hybrid network model. Finally, Section \ref{SEC:CONCL} summarizes
the paper with some concluding remarks.


\section{System Model} \label{SEC:System}

We consider a two-dimensional wireless ad hoc network consisting
of $n$ nodes that are uniformly distributed at random on a square.
The nodes are grouped into $n/2$ SD pairs at random. Each node
operates in half-duplex mode and is equipped with a single
directional antenna. The network area is assumed to be {one and
$n$ in dense and extended} networks, respectively.


\begin{figure}[t!]
  \begin{center}
  \leavevmode \epsfxsize=0.35\textwidth   
  \leavevmode 
\epsffile{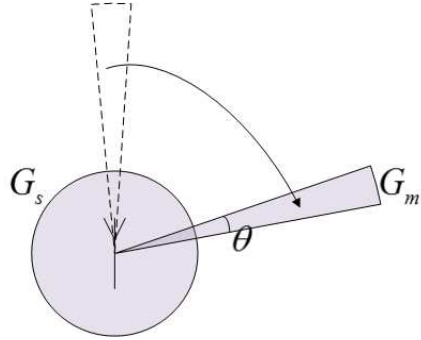} \caption{The beam pattern of
the directional antenna model under consideration.}
\label{FIG:BeamPattern}
  \end{center}
\end{figure}

We define a hybrid antenna model whose mainlobe is characterized
as a sector and whose sidelobe forms a circle (backlobes are
ignored in this model).\footnote{Instead of directional antennas
modeled as a cone in a three-dimensional
view~\cite{LiZhangFang:11}, we use a rather simple two-dimensional
antenna model since simplifying the shape of the antenna pattern
will not cause any fundamental change in terms of capacity scaling
law.} As illustrated in Fig.~\ref{FIG:BeamPattern}, the antenna
beam pattern has a gain value $G_m$ for the mainlobe of beam width
$\theta\in [0,2\pi)$, and also has a sidelobe of gain $G_s$ of
beam width $2\pi-\theta$. The parameters $G_m$ and $G_s$ are then
related according to
\begin{align}
\frac{\theta}{2\pi}G_m+\frac{2\pi-\theta}{2\pi}G_s = 1, \nonumber
\end{align}
where $0\le G_s\le 1 \le G_m$. In our work, we assume that
$G_m=\Theta(1/\theta)$ and $G_s=\Theta(1)$, which do not violate
the law of conservation of energy.\footnote{Note that $G_m=G_s=1$
in the omnidirectional mode.} For simplicity, we assume unit
antenna efficiency, i.e., no antenna loss. Each node can steer its
antenna for directional transmission or directional reception.


Similar to~\cite{LiZhangFang:11}, for a given time instance,
suppose that node $i\in\{1,\cdots, n\}$ transmits to node
$k\in\{1,\cdots, n\}\setminus \{i\}$ and they beamform to each
other according to the above assumption. Let $\mathcal{I}_1$,
$\mathcal{I}_2$, and $\mathcal{I}_3$ denote three different sets
of nodes transmitting at the same time, where both nodes $i_1 \in
\mathcal{I}_1$ and $k$ beamform to each other, either node $i_2
\in \mathcal{I}_2$ or $k$ beamforms to the other node (but not
both), and neither node $i_3 \in \mathcal{I}_3$ nor $k$ beamforms
to the other node, respectively. Note that node $i$ is in
$\mathcal{I}_1$ since nodes $i$ and $k$ beamform to each other.

Under the directional antenna model, the received signal of node
$k$ at the given time instance is represented by
\begin{align}
y_k = \sum_{i_1\in \mathcal{I}_1}h_{ki_1}x_{i_1} + \sum_{i_2\in
\mathcal{I}_2}h_{ki_2}x_{i_2} + \sum_{i_3\in
\mathcal{I}_3}h_{ki_3}x_{i_3} + n_k, \nonumber
\end{align}
where $x_{i_1}$, $x_{i_2}$, and $x_{i_3}\in\mathbb{C}$ are the
signals transmitted by nodes $i_1$, $i_2$, and $i_3$,
respectively, and $n_k$ denotes the circularly symmetric complex
Gaussian noise with zero mean and variance $N_0$. The channel
coefficients $h_{ki_1}$, $h_{ki_2}$, and $h_{ki_3}$ are given by
\begin{subequations}
\begin{align}
h_{ki_1}&=\frac{G_m e^{j\phi_{ki_1}}}{r_{ki_1}^{\alpha/2}}, \nonumber\\
h_{ki_2}&=\frac{\sqrt{G_mG_s}
e^{j\phi_{ki_2}}}{r_{ki_2}^{\alpha/2}},
\nonumber\\
h_{ki_3}&=\frac{G_s e^{j\phi_{ki_3}}}{r_{ki_3}^{\alpha/2}},
\nonumber
\end{align}
\end{subequations}
respectively, where $\phi_{kj}$ represents the random phase
uniformly distributed over $[0,2\pi)$ and independent for
different $j$, $k$, and time (transmission symbol), i.e., fast
fading~\cite{OzgurLevequeTse:07,ShinJeonDevroyeVuChungLeeTarokh:08,OzgurJohariTseLeveque:10}.
The parameters $r_{kj}$ and $\alpha>2$ denote the distance between
nodes $j$ and $k$, and the path-loss exponent, respectively. Each
node should satisfy the average transmit power constraint $P>0$
during transmission, which is a constant. Channel state
information (CSI) is assumed to be available at all receivers, but
not at transmitters.

In the following, we formally define the per-node and aggregate
throughputs used throughout the paper.

\begin{definition}[Throughput] \label{def:throughput}
A per-node throughput $R(n)$ is said to be \emph{achievable} with
high probability (w.h.p.) if all sources can transmit at the rate
of $R(n)$ bits/sec/Hz to their destinations with probability
approaching one as $n$ increases. Accordingly, the achievable
aggregate throughput is at least given by $T(n)=(n/2)R(n)$.
\hfill$\lozenge$
\end{definition}

For the rest of this paper, we will analyze throughput scaling
laws of wireless ad hoc networks equipping directional antennas at
each node. Throughout the paper, $\mathbb{E}[\cdot]$ and
$\Pr(\cdot)$ denote the statistical expectation and the
probability, respectively. Unless otherwise stated, all logarithms
are assumed to be to base two.

\section{Elastic Routing Protocol} \label{SEC:routing}
%
%

In this section, we describe our elastic routing protocol, which
can be designed with the help of the directional antennas. Under
the protocol, we perform multihop (or even single-hop)
transmission by \emph{elastically} increasing per-hop distance as
a function of the scaling parameter $\theta$, which ultimately
enhances the throughput performance compared to the conventional
multihop~\cite{GuptaKumar:00,ElGamalMammenPrabhakarShah:06}. We
focus primarily on the dense network configuration, but the
overall procedure of our protocol can be directly applied to the
extended network configuration.

\begin{figure}[t!]
  \begin{center}
  \leavevmode \epsfxsize=0.43\textwidth   
  \leavevmode 
\epsffile{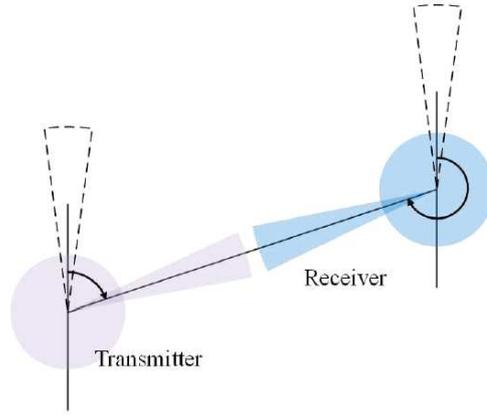} \caption{Beam steering at each
transmitter--receiver pair.} \label{FIG:TX_RX_BF}
  \end{center}
\end{figure}

Let us first show how to perform beam steering at each node using
directional antennas. At each hop, as illustrated in Fig.
\ref{FIG:TX_RX_BF}, the antennas of each selected
transmitter--receiver pair are steered so that their beams cover
each other, which enables to achieve the maximum antenna gain $G_m^2$ at each transmitter--receiver pair. 

\begin{figure*}[t!]
  \begin{center}
  \leavevmode \epsfxsize=0.71\textwidth   
  \leavevmode 
\epsffile{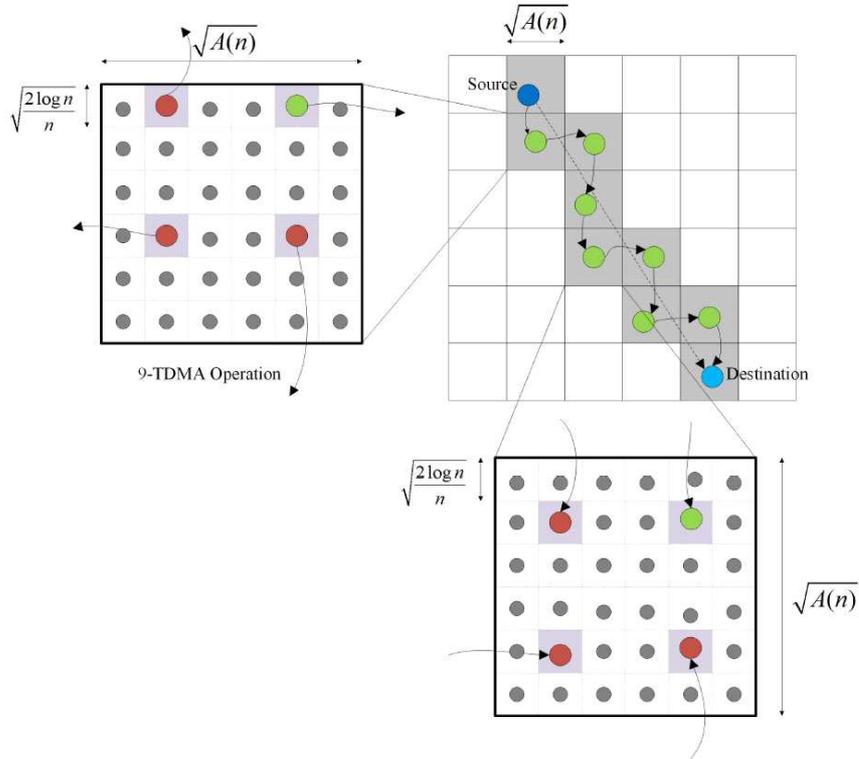} \caption{An example of an SD
path passing through its associated routing regions in the dense
network, where the 9-TDMA scheme is used.} \label{FIG:SD_Pair}
  \end{center}
\end{figure*}

We now describe how each SD pair performs the elastic routing
based on the beam steering technique described above. Let $T_S$ be
the total number of scheduling time slots. We assume that, at each
time slot $s \in \{1,\cdots,T_S\}$, randomly chosen $M(n)$ SD
pairs are activated simultaneously, where  $M(n)$ scales as
$\Omega(\sqrt{n/\log n})$ and $O(n)$.\footnote{It is not desirable
that $M(n)$ scales as $o(\sqrt{n/\log n})$ since, in this case,
the throughput scaling achieved by our elastic routing is less
than that of the conventional multihop scheme.} We denote the set
of scheduling time slots to which the $p$th SD pair belongs by
$\Phi_p$, where $p\in\{1,\cdots,n/2\}$. For instance, if the $p$th
SD pair is scheduled at time slots $1$, $3$, and $5$, $\Phi_p$ is
given by $\Phi_p = \{1, 3, 5\}$. We further denote $T_p \triangleq
|\Phi_p |$, where $|\Phi_p |$ denotes the cardinality of $\Phi_p$.

The following lemma shows that each of $n/2$ SD pairs can be
served with almost the same fraction of time in the limit of large
$n$.

\begin{lemma}[Strong typicality] \label{LMM:typicality}
Suppose that $T_S=n^4$. Then, for sufficiently large $n$,
$\frac{T_p}{T_S}$ is lower-bounded by
\begin{align}
\frac{2M(n)}{n} - \frac{1}{n} \nonumber
\end{align}
w.h.p. for all $p \in \{1,\cdots,n/2 \}$.
\end{lemma}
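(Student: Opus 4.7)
The plan is to view $T_p$ as a sum of independent Bernoulli indicators and then apply a standard Chernoff/Hoeffding concentration inequality followed by a union bound over the $n/2$ SD pairs.

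First I would fix a pair $p \in \{1,\ldots,n/2\}$ and, for each slot $s \in \{1,\ldots,T_S\}$, define the indicator $X_{p,s}$ that pair $p$ is among the $M(n)$ pairs activated in slot $s$. Since the $M(n)$ pairs are chosen uniformly at random from the pool of $n/2$ SD pairs at each slot and the draws across slots are independent, the $X_{p,s}$ are i.i.d.\ Bernoulli with success probability $\frac{M(n)}{n/2} = \frac{2M(n)}{n}$. Hence $T_p = \sum_{s=1}^{T_S} X_{p,s}$ and
\begin{align}
\mathbb{E}[T_p] = T_S \cdot \frac{2M(n)}{n}. \nonumber
\end{align}
The target inequality $\frac{T_p}{T_S} \ge \frac{2M(n)}{n} - \frac{1}{n}$ is then equivalent to $T_p \ge \mathbb{E}[T_p] - \frac{T_S}{n} = \mathbb{E}[T_p] - n^3$, using $T_S = n^4$.

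Next I would apply Hoeffding's inequality (or equivalently a Chernoff bound) to the lower tail of the sum of $T_S = n^4$ i.i.d.\ bounded random variables:
\begin{align}
\Pr\!\left( T_p \le \mathbb{E}[T_p] - n^3 \right) \le \exp\!\left(-\frac{2 (n^3)^2}{T_S}\right) = \exp(-2n^2). \nonumber
\end{align}
The key point here is that the slack $n^3$ is huge relative to $\sqrt{T_S}=n^2$, so the tail decays doubly-exponentially in $n$.

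Finally, I would take a union bound over all $n/2$ SD pairs:
\begin{align}
\Pr\!\left( \exists\, p:\; \tfrac{T_p}{T_S} < \tfrac{2M(n)}{n} - \tfrac{1}{n}\right) \le \tfrac{n}{2} \exp(-2n^2), \nonumber
\end{align}
which tends to $0$ as $n \to \infty$, giving the claim w.h.p. No step looks particularly delicate: the only subtlety is making sure the per-slot sampling is treated correctly (within one slot the $M(n)$ pairs are chosen without replacement, but we only need the marginal for a single fixed pair $p$ at each slot, which is Bernoulli$(\tfrac{2M(n)}{n})$, and independence across slots is by assumption). The main "obstacle," if any, is just choosing the slack $1/n$ so that $T_S/n = n^3 \gg \sqrt{T_S}$, which the choice $T_S = n^4$ comfortably guarantees.
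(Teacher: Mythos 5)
Your proof is correct and has the same overall structure as the paper's: model $T_p$ as a sum of $T_S$ i.i.d.\ Bernoulli$\left(\tfrac{2M(n)}{n}\right)$ indicators, apply a concentration inequality with slack $\xi=\tfrac{1}{n}$, and union-bound over the $n/2$ pairs. The only real difference is the concentration tool. The paper invokes \cite[Lemma 2.12]{CsiszarKorner:81}, which is a Chebyshev-type (second-moment) bound: it yields failure probability at most $\tfrac{n}{4T_S\xi^2}=\tfrac{1}{4n}$ after the union bound, which decays only polynomially but is enough for the w.h.p.\ claim. You use Hoeffding's inequality instead, obtaining $\tfrac{n}{2}\exp(-2n^2)$, a far stronger (exponentially small) failure probability. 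Your route is arguably cleaner and shows that $T_S$ could be taken much smaller than $n^4$ (any $T_S=\omega(n^2\log n)$ would do with Hoeffding, whereas the Chebyshev route genuinely needs $T_S\gg n^3$ for the union bound to close); the paper's choice $T_S=n^4$ is made to keep the elementary second-moment argument sufficient. Two tiny quibbles: the tail $\exp(-2n^2)$ is exponentially, not ``doubly-exponentially,'' small in $n$; and your remark about within-slot sampling without replacement is exactly the right thing to flag --- only the per-slot marginal of the fixed pair $p$ and independence across slots are needed, which both arguments use identically.
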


\begin{proof}
Since $M(n)$ SD pairs among $n/2$ are scheduled at random for each
time slot, the probability that a specific SD pair is served at
each time slot is given by $2M(n)/n$. Hence, by applying the
result of \cite[Lemma 2.12]{CsiszarKorner:81}, we show that, for
any $\xi > 0$, the probability that
\begin{align}
\left|\frac{T_p}{T_S} - \frac{2M(n)}{n} \right| \le \xi
\nonumber 
\end{align}
for all $p \in \{1,\cdots,n/2\}$ is greater than $1-\frac{n}{4 T_S
\xi^2}$. Then, by setting $\xi = \frac{1}{n}$, we have
$\frac{T_p}{T_S} \ge \frac{2M(n)}{n} - \frac{1}{n}$ with
probability greater than $1-\frac{1}{4n}$, which converges to one
as $n$ tends to infinity. This completes the proof of the lemma.
\end{proof}

As depicted in Fig.~\ref{FIG:SD_Pair}, we divide the whole area
into $1/A(n)$ square {\em routing cells} with per-cell area
$A(n)$, where $A(n)$ is assumed to scale as $\Omega(\log n/n)$ and
$O(1)$. We draw the straight line connecting a source to its
destination, termed an SD line. Then, packets for each SD pair
travel horizontally or vertically along its SD line by hopping
along adjacent routing cells of area $A(n)$ until they reach the
corresponding destination. While travelling along its SD line, a
certain node in each routing cell is arbitrarily selected as a
relay forwarding the packets. As in the earlier
work~~\cite{GuptaKumar:00,ElGamalMammenPrabhakarShah:06}, in the
dense network, a transmit power of $P (\log n / n)^{\alpha / 2}$
is used at each node.

We further divide each routing cell with area $A(n)$ into smaller
square regions of area $2\log n / n$, which guarantees that each
smaller region has at least one node w.h.p.
(see~\cite{GuptaKumar:00} for the details).
 As
in~\cite{GuptaKumar:00,ElGamalMammenPrabhakarShah:06}, the 9-time
division multiple access (TDMA) scheme is used between smaller
regions to avoid huge interference (see also
Fig.~\ref{FIG:SD_Pair}). Then, only a single node in each
activaved smaller region transmits its packets.

Figure~\ref{FIG:SD_Pair} illustrates the packet transmission of a
particular SD pair, where the source and destination nodes are
indicated by blue circles, the relay nodes are indicated by green
circles, and the interfering nodes that simultaneously transmit
packets of other SD pairs are indicated by red circles,
respectively.

Let $d_{\text{hop}}$ and $\bar{h}$ denote the average transmission
distance at each hop and the average number of hops per SD pair,
respectively. Then, from the above routing, it follows that
\begin{align}
A(n)=\Theta\left(d_{\text{hop}}^2\right)=\Theta\left(\frac{1}{\bar{h}^2}\right)
\label{EQ:An_d_hop}
\end{align}
in the dense network.

Now, let us turn to how to decide per-cell area $A(n)$ according
to given beam width $\theta$, which plays an important role in
determining the throughput of the proposed elastic routing. If
$A(n)$ is given by a function of $n$, then the average per-hop
distance $d_{\text{hop}}$ can also be determined using
(\ref{EQ:An_d_hop}). We note that $d_{\text{hop}}$ is elastically
increased as much as possible while the average SINR at each
receiver is set to $\Theta(1)$.
Such $d_{\text{hop}}$ is shown according to the value of $\theta$
as follows:
\begin{align} \label{EQ:d_hop}
d_{\text{hop}}&=\Theta\left(\min\left\{\sqrt{\frac{\log n}{n}}
\theta^{-\frac{2}{\alpha}} , 1 \right\}  \right) \nonumber\\
&= \left\{
\begin{array}{ll}
 \Theta\left(\sqrt{\frac{\log n}{n}}
\theta^{-\frac{2}{\alpha}}\right) & \textrm{if
$\theta^{-1}=o\left(\left(\frac{n}{\log
n}\right)^{\alpha/4}\right)$}\\
 \Theta\left(1\right) &\textrm{if
$\theta^{-1}=\Omega\left(\left(\frac{n}{\log n} \right)^{\alpha/4}
\right)$},
\end{array}
\right.
\end{align}
which will be verified later. Note that
$d_{\text{hop}}=\Theta(\sqrt{\log n / n})$ in the omnidirectional
mode (i.e., $\theta=\Theta(1)$).

As expressed in (\ref{EQ:d_hop}), according to the beam width
$\theta$, the whole operating regimes are divided into two
fundamental regimes. The two operating regimes and their
corresponding routing schemes in each regime are summarized as
follows.

\begin{itemize}
%
%
\item {\bf Regime I}: $\theta^{-1} = o \left( \left( \frac{n}{\log
n}\right)^{\alpha/4} \right)$.

In the regime, the elastic routing outperforms the
nearest-neighbor multihop routing. As $\theta^{-1}$ increases, a
long-range transmission is performed at each hop owing to an
enhanced antenna gain at each transmitter--receiver pair.

\item {\bf Regime II}: $\theta^{-1} = \Omega \left( \left(
\frac{n}{\log n}\right)^{\alpha/4} \right)$.

The single-hop transmission is performed in the regime, where
per-hop distance can reach up to $O\left(1\right)$.
\end{itemize}

In the extended network, a full transmit power $P$ is used at each
node. Then, it follows that
\begin{align}
d_{\text{hop}}&=\Theta\left( \min  \left\{ \sqrt{\log
n}\theta^{-2/\alpha} ,\sqrt{n} \right\} \right) \nonumber\\ &=
\left\{
\begin{array}{ll}
 \Theta\left(\sqrt{\log
n}\theta^{-2/\alpha}\right) & \textrm{if
$\theta^{-1}=o\left(\left(\frac{n}{\log
n}\right)^{\alpha/4}\right)$}\\
 \Theta\left(\sqrt{n}\right) &\textrm{if
$\theta^{-1}=\Omega\left(\left(\frac{n}{\log n} \right)^{\alpha/4}
\right)$},
\end{array}
\right. \label{EQ:d_hop_extended}
\end{align}
which turns out to be scaled up by a factor of $\sqrt{n}$,
compared to the dense network case in~(\ref{EQ:d_hop}).

Since, unlike the nearest-neighbor multihop
routing~\cite{GuptaKumar:00}, packets can travel much farther at
each hop for the proposed elastic routing in the directional mode,
the average number of hops, $\bar{h}$, can be reduced
significantly, thereby resulting in an improved throughput up to a
linear scaling, which will also be specified in the following
section.

\begin{figure*}[t!]
  \begin{center}
  \leavevmode \epsfxsize=0.94\textwidth   
  \leavevmode 
\epsffile{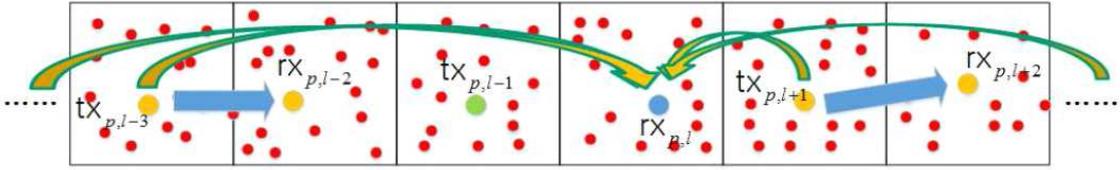} \caption{The
intra-pair interference affecting ${\sf rx}_{p,l}$ with correlated
beam directions, where ${\sf tx}_{p,l}$ is the desired
transmitter.} \label{FIG:I_MODEL_INTER_HOP}
  \end{center}
\end{figure*}


\section{Main Results} \label{SEC:Throughput}

This section presents our main result, which shows throughput
scaling laws achievable by elastic routing for both dense and
extended networks using directional antennas.



\subsection{Throughput Scaling in Dense Networks} \label{SEC:Throughput_Dense}

In this subsection, the throughput scaling for dense networks
under the elastic routing protocol is analyzed. Let ${\sf
tx}_{p,l}$ and ${\sf rx}_{p,l}$ denote the transmitter and the
corresponding receiver of the $l$th hop of the {$p$}th SD pair,
respectively, where {$l\in\mathcal{H}_p$} and
{$p\in\{1,\cdots,n/2\}$}. Recall that
{$\mathcal{H}_p=\{1,2,\cdots,\beta_p \bar{h}\}$} denotes the set
of hops for the {$p$}th SD pair, where {$\beta_p>0$} is a
parameter that scales as $O(1)$. Here, ${\sf tx}_{p,l}$ and ${\sf
rx}_{p,l}$ are fixed during the entire scheduling time since they
are solely determined by the network geometry. Let
$\text{SINR}_{p,l}(s)$ denote the instantaneous received SINR of
${\sf rx}_{p,l}$ at time slot $s \in \{1,\cdots, T_S\}$ for the
$l$th hop of the {$p$}th SD pair. Then, we have
\begin{align}
\text{SINR}_{p,l} (s)& =\frac{P_{p,l}(s)}{N_0+I_{p,l}(s)},
\label{EQ:SINR}
\end{align}
where $P_{p,l}(s)$ denotes the received signal power of  ${\sf
rx}_{p,l}$ from the desired transmitter ${\sf tx}_{p,l}$ at time
slot $s$ and {$I_{p,l}(s)$} denotes the total interference power
of ${\sf rx}_{p,l}$ from all interfering nodes at time slot $s$.

Obviously, one can see that
\begin{align}
P_{p,l}(s)& = 0 \mbox{ if } s\notin \Phi_p \label{EQ:Pr1}
\end{align}
from the proposed elastic routing. For $s\in \Phi_p$, the $p$th SD
pair is activated and nodes ${\sf tx}_{p,l}$ and ${\sf rx}_{p,l}$
direct their beams to each other to maximize the antenna gain
during the $l$th hop transmission. Hence, we have
\begin{align}
P_{p,l}(s)& = |h_{{\sf rx}_{p,l}{\sf tx}_{p,l}}|^2
P\left(\frac{\log
n}{n} \right)^{\alpha / 2} \nonumber \\
& =\Omega\left(G_m^2 \left(\frac{\log n}{d_{\text{hop}}^2 n}
\right)^{\alpha / 2}\right) \mbox{ if } s\in \Phi_p.
\label{EQ:Pr2}
\end{align}

Now, we turn to computing the total interference power
$I_{p,l}(s)$. For analytical convenience, we first divide
$I_{p,l}(s)$ into two parts, $I^{[1]}_{p,l}(s)$ and
$I^{[2]}_{p,l}(s)$, which indicate the intra-pair interference
power and the inter-pair interference power at time slot
$s\in\{1,\cdots, T_S\}$, respectively. Due to our multihop-based
elastic routing characteristics, the beam directions at the
transmitters and receivers belonging to the same SD pair may be
highly correlated to each other.
Figure \ref{FIG:I_MODEL_INTER_HOP} illustrates how the intra-pair
interference is generated along with correlated beam directions
when multiple transmitters in each SD pair are simultaneously
activated. For this reason, we treat the intra-pair interference
separately from the inter-pair interference. Then, it follows
that\footnote{Due to the 9-TDMA scheme, even if the distance
between ${\sf rx}_{p,l}$ and each intra-pair interferer needs to
be more carefully considered, it does not fundamentally change the
scaling law result for $I^{[1]}_{p,l}(s)$.}
\begin{align}
I^{[1]}_{p,l}(s) & \le 2P\left(\frac{\log n}{n}
\right)^{\alpha/2}\Biggl(\sum_{l=1}^{\beta_p \bar{h}}
\frac{G_m^2}{\left(l \cdot d_{\text{hop}} \right)^{\alpha}}\Biggr)
\nonumber \\
&= O\left(G_m^2 \left(\frac{\log n}{d_{\text{hop}}^2 n}
\right)^{\alpha / 2} \right), \label{EQ:P_IHI}
\end{align}
where the inequality holds since at most two nodes can generate
the intra-pair interference between $(l-1)d_{\text{hop}}$ and $l
d_{\text{hop}}$ apart from node ${\sf rx}_{p,l}$. Each receiver
also suffers from the inter-pair interference, which is generated
by other activated SD pairs. Unlike the intra-pair interference
case, the routing path of all SD pairs is determined independently
of each other since $M(n)$ SD pairs are chosen uniformly at random
at each time slot $s$. As illustrated in Fig. \ref{FIG:I_MODEL},
when the 9-TDMA scheme is used, the nodes generating the
inter-pair interference can be computed by using the layering
technique~\cite{OzgurLevequeTse:07,ShinJeonDevroyeVuChungLeeTarokh:08}
with the concept of \emph{tier} indexed by $t$, where ${\sf
in}_{p,l,t,i}(s)$ denotes the $i$th inter-pair interferer placed
on the $t$th tier that causes the inter-pair interference to ${\sf
rx}_{p,l}$ at time slot $s$. Note that ${\sf in}_{p,l,t,i}(s)$ may
vary over time slots since $M(n)$ SD pairs are uniformly chosen at
random for each time slot $s\in\{1,\cdots,T_S\}$.
 Then,
the total amount of inter-pair interference of  ${\sf rx}_{p,l}$
at time $s$ is upper-bounded by
\begin{align}
I^{[2]}_{p,l}(s)& \leq \sum_{t=1}^{\infty} \sum_{i=1}^{8t}
I^{[2]}_{p,l,t,i}(s)  \nonumber \\
& \le P \left( \frac{\log n}{n} \right)^{\alpha/2} \nonumber \\
&~~~ \cdot\sum_{t=1}^{\infty} \sum_{i=1}^{8t} \left(t
\sqrt{\frac{\log n}{n}}\right)^{-\alpha} X_{p,l,t,i}(s)
\nonumber \\
& = P\sum_{t=1}^{\infty} \sum_{i=1}^{8t}t^{-\alpha}
X_{p,l,t,i}(s), \label{EQ:Inter_Pair_Interference}
\end{align}
where $I^{[2]}_{p,l,t,i}(s)$ denotes the inter-pair interference
power of ${\sf rx}_{p,l}$ caused by the inter-pair interferer
${\sf in}_{p,l,t,i}(s)$ at time slot $s$ and  $X_{p,l,t,i}(s) \in
\{G_m^2, G_m G_s, G_s^2\}$ denotes the antenna gain between  ${\sf
rx}_{p,l}$ and ${\sf in}_{p,l,t,i}(s)$ at time slot $s$.


Before presenting our main result, we start from the following
lemma, which establishes an upper bound on the expected inter-pair
interference power at each receiver.

\begin{figure}[t!]
  \begin{center}
  \leavevmode \epsfxsize=0.7\textwidth   
  \leavevmode 
\epsffile{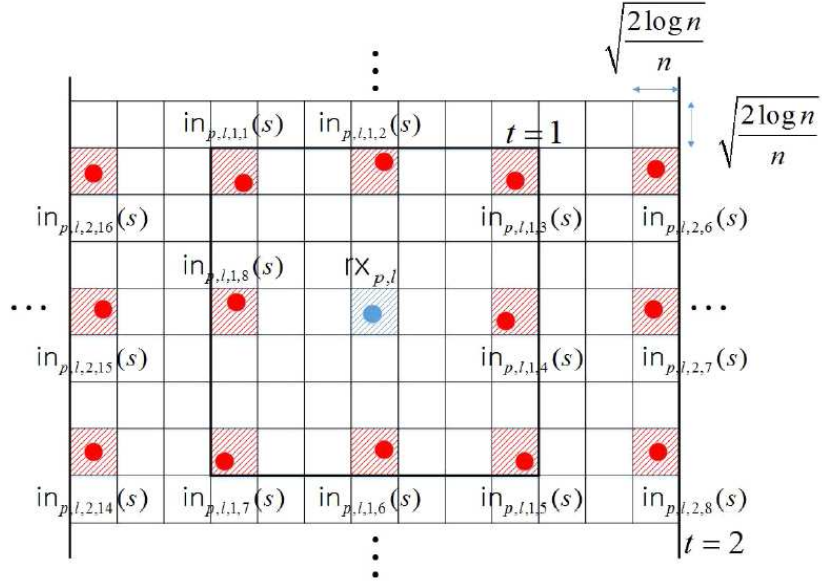} \caption{The
inter-pair interferers that affect ${\sf rx}_{p,l}$ when the
9-TDMA is used in the dense network.} \label{FIG:I_MODEL}
  \end{center}
\end{figure}

\begin{lemma} \label{LEM:Int1}
Consider the {$p$}th SD pair whose packets travel over {$\beta_p
\bar{h}$} hops from the source to its destination. Then,
the expectation of the inter-pair interference power
$I^{[2]}_{p,l}(s)$ is upper-bounded by
\begin{align}
\mathbb{E}\left[I^{[2]}_{p,l}(s) \right] = O(1). \label{EQ:avgPIu}
\end{align}
\end{lemma}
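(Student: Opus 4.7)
The plan is to bound the expectation by taking it inside the sum in~(\ref{EQ:Inter_Pair_Interference}) and showing that the per-interferer expected antenna-gain contribution is $\Theta(1)$, so that the remaining sum over tiers is a convergent series thanks to $\alpha>2$. Concretely, starting from
\begin{align}
\mathbb{E}\left[I^{[2]}_{p,l}(s)\right] \le P \sum_{t=1}^{\infty} \sum_{i=1}^{8t} t^{-\alpha}\, \mathbb{E}\left[X_{p,l,t,i}(s)\right], \nonumber
\end{align}
I would reduce the problem to estimating $\mathbb{E}[X_{p,l,t,i}(s)]$, where $X_{p,l,t,i}(s)\in\{G_m^2, G_m G_s, G_s^2\}$.

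Next I would compute the probability mass function of $X_{p,l,t,i}(s)$ by exploiting the independence between SD pairs: since the $M(n)$ scheduled SD pairs are chosen uniformly at random at each time slot and each interferer ${\sf in}_{p,l,t,i}(s)$ steers its beam toward its own receiver, the direction of that beam is essentially independent of the geometric direction from ${\sf in}_{p,l,t,i}(s)$ to ${\sf rx}_{p,l}$. Consequently, the event that the interferer's mainlobe covers ${\sf rx}_{p,l}$ has probability $\Theta(\theta/(2\pi))=\Theta(\theta)$, and symmetrically the event that ${\sf rx}_{p,l}$'s mainlobe covers the interferer also has probability $\Theta(\theta)$. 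Since these two events are independent, this yields
\begin{align}
\Pr(X = G_m^2)=\Theta(\theta^2),\quad \Pr(X = G_mG_s)=\Theta(\theta),\quad \Pr(X = G_s^2)=\Theta(1). \nonumber
\end{align}
Combined with $G_m=\Theta(1/\theta)$ and $G_s=\Theta(1)$, each of the three contributions $G_m^2\cdot\theta^2$, $G_mG_s\cdot\theta$, and $G_s^2\cdot 1$ is $\Theta(1)$, so $\mathbb{E}[X_{p,l,t,i}(s)]=\Theta(1)$ uniformly in $t$ and $i$.

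Substituting back gives
\begin{align}
\mathbb{E}\left[I^{[2]}_{p,l}(s)\right] = \Theta(1)\sum_{t=1}^{\infty} 8t\cdot t^{-\alpha} = \Theta(1)\sum_{t=1}^{\infty} t^{1-\alpha}, \nonumber
\end{align}
which converges to a constant because $\alpha>2$, establishing $\mathbb{E}[I^{[2]}_{p,l}(s)]=O(1)$. The step I expect to be the most delicate is the probability computation: one must argue carefully that scaling each probability against the corresponding antenna gain exactly cancels the $\theta$-dependence, and that the uniform random scheduling of $M(n)$ pairs combined with the 9-TDMA constraint is indeed compatible with treating each interferer's beam direction as independent of its relative position to ${\sf rx}_{p,l}$. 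Once this cancellation is in place, the tier-wise layering bound $8t$ together with $\alpha>2$ closes the argument routinely.
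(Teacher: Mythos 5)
Your proposal is correct and follows essentially the same route as the paper's proof: both exploit the independence of the inter-pair interferer's beam direction from the receiver's beam direction to obtain the probabilities $\Theta(\theta^2)$, $\Theta(\theta)$, $\Theta(1)$ for the gains $G_m^2$, $G_mG_s$, $G_s^2$, observe that each product is $\Theta(1)$ so $\mathbb{E}[X_{p,l,t,i}(s)]=\Theta(1)$, and close with the convergent tier sum $\sum_t 8t\cdot t^{-\alpha}$ for $\alpha>2$. No substantive differences to report.
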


\begin{proof}
First of all, similarly as
in~\cite{YiPeiKalyanaraman:03,LiZhangFang:11}, the direction of
the receive beam  at node ${\sf rx}_{p,l} $ is independent of the
direction of the transmit beam at ${\sf in}_{p,l,t,i}$ for
$t\in\{1,2,\cdots\}$ and $i \in \{1,\cdots, 8t\}$ since inter-pair
interferers are only concerned.
Thus, from (\ref{EQ:Inter_Pair_Interference}), the expectation of
the inter-pair interference power is upper-bounded by
\begin{align} \label{EQ:GAIN_PR2}
&\mathbb{E}\left[I^{[2]}_{p,l}(s) \right] \nonumber\\
&\le \mathbb{E}\left[P\sum_{t=1}^{\infty}
\sum_{i=1}^{8t}t^{-\alpha}
X_{p,l,t,i}(s) \right]\nonumber \\
& = P\mathbb{E} \left[X_{p,l,t,i}(s) \right] \sum_{t=1}^{\infty}
\sum_{i=1}^{8t}t^{-\alpha} \nonumber\\
& = 8P\mathbb{E} \left[X_{p,l,t,i}(s)
\right]\sum_{t=1}^{\infty}t^{1-\alpha},
\end{align}
where the first equality holds since $\mathbb{E}
\left[X_{p,l,t,i}(s) \right]$ is the same for all $t \in
\{1,2,\cdots,\}$ and $i \in \{1,\cdots,8t \}$. By using the fact
that
\begin{subequations} \label{EQ:GAIN_PR}
\begin{align}
\Pr\left(X_{p,l,t,i}(s)= G_m^2 \right) &= \frac{\theta^2}{4 \pi^2},  \\
\Pr\left(X_{p,l,t,i}(s)= G_m G_s \right) &= \frac{(2\pi - \theta)\theta}{2 \pi^2} , \\
\Pr\left(X_{p,l,t,i}(s)= G_s^2 \right)&= \frac{(2\pi -
\theta)^2}{4 \pi^2},
\end{align} \label{EQ:GAIN_PR}
\end{subequations}
we have
\begin{align} \label{EQ:GAIN_PR3}
&\mathbb{E} \left[ X_{I2}^{k_{(p,l)}(t,i)}(s)\right]\nonumber\\
&=\frac{\theta^2}{4 \pi^2} G_m^2 + \frac{(2\pi - \theta)\theta}{2
\pi^2} G_m G_s + \frac{(2\pi - \theta)^2}{4 \pi^2} G_s^2
\nonumber\\ & =\Theta(1),
\end{align}
which comes from the fact that $G_m=\Theta(1/\theta)$ and
$G_s=\Theta(1)$. Finally, from \eqref{EQ:GAIN_PR2} and
\eqref{EQ:GAIN_PR3} and the fact that
$\sum_{t=1}^{\infty}t^{1-\alpha}$ is upper-bounded by a constant
for $\alpha>2$, we have (\ref{EQ:avgPIu}), which completes the
proof of the lemma.
\end{proof}

In the following theorem, we state the aggregate throughput
achievable by elasting routing in the dense network.

\begin{theorem} \label{THM:ELASTIC}
In the dense ad hoc network of unit area, the aggregate throughput
achieved by elastic routing is given by
\begin{align} \label{EQ:Tn_dense}
\lefteqn{T(n) = \Omega
\left(\min\left\{\sqrt{n}\theta^{-2/\alpha},n\right\}n^{-\epsilon}\right)}& \nonumber\\
&=\left\{
\begin{array}{lll}
\Omega \left(n^{1/2-\epsilon} \theta^{-2/\alpha}\right)
&\textrm{if
$\theta^{-1}\!\!=\!\!o\left(\left(\frac{n}{\log n}\right)^{\alpha/4}\right)$} \\
\Omega(n^{1-\epsilon}) & \textrm{if
$\theta^{-1}\!\!=\!\!\Omega\left(\left(\frac{n}{\log
n}\right)^{\alpha/4}\right)$}
\end{array}
\right.
\end{align}
w.h.p., where $\epsilon>0$ is an arbitrarily small constant.
\end{theorem}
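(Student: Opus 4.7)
The plan is to establish the lower bound via three ingredients: (a) a uniform per-link rate guarantee of $\Omega(n^{-\epsilon})$ on every active hop, (b) a feasible schedule in which $M(n)=\Theta(n\,d_{\text{hop}}/\log n)$ SD pairs are simultaneously active, and (c) Lemma~\ref{LMM:typicality} to translate per-slot activity into a per-pair time-averaged rate. Combining these, inserting (\ref{EQ:d_hop}) for each of the two regimes, and absorbing polylogarithmic factors into the $n^{-\epsilon}$ slack then yields (\ref{EQ:Tn_dense}).

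For ingredient (a), I would first combine the signal-power bound (\ref{EQ:Pr2}) with the intra-pair interference bound (\ref{EQ:P_IHI}); both are of the same order $\Theta(G_m^{2}(\log n/(d_{\text{hop}}^{2}n))^{\alpha/2})$, so substituting $d_{\text{hop}}$ from (\ref{EQ:d_hop}) into (\ref{EQ:SINR}) immediately yields $P_{p,l}(s)/(N_{0}+I^{[1]}_{p,l}(s))=\Theta(1)$, which in fact justifies the very form of (\ref{EQ:d_hop}). The remaining task is to control the random inter-pair term $I^{[2]}_{p,l}(s)$. Lemma~\ref{LEM:Int1} supplies $\mathbb{E}[I^{[2]}_{p,l}(s)]=O(1)$; applying Markov's inequality and then taking a union bound over the $\Theta(n^{2})$ active (pair, hop) indices and the $T_{S}=n^{4}$ time slots, I obtain $I^{[2]}_{p,l}(s)=O(n^{\epsilon})$ uniformly w.h.p., and hence $\text{SINR}_{p,l}(s)=\Omega(n^{-\epsilon})$ and $\log(1+\text{SINR}_{p,l}(s))=\Omega(n^{-\epsilon})$ whenever the pair is scheduled.

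For ingredient (b), I would choose $M(n)=\Theta(n\,d_{\text{hop}}/\log n)$ and verify routing feasibility at the cell level. Each routing cell of area $A(n)=\Theta(d_{\text{hop}}^{2})$ contains $\Theta(n\,d_{\text{hop}}^{2}/\log n)$ smaller regions, so under the 9-TDMA scheme a per-cell capacity of $\Theta(n\,d_{\text{hop}}^{2}/\log n)$ simultaneous transmissions is available. With $M(n)$ randomly chosen SD pairs, the expected number of active paths traversing any cell is $M(n)\,d_{\text{hop}}$, which by the above choice matches the capacity up to a constant. A Chernoff-type concentration bound together with a union bound over the $\Theta(1/d_{\text{hop}}^{2})$ cells then certifies that no cell is overloaded w.h.p., so each active pair can pipeline its packets along its $\bar{h}=\Theta(1/d_{\text{hop}})$ hops and enjoy per-pair rate $\Omega(n^{-\epsilon})$ during every active slot.

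Finally, for ingredient (c), Lemma~\ref{LMM:typicality} gives each pair's time-averaged active fraction $\geq 2M(n)/n - 1/n$ w.h.p., so the per-pair throughput is $R(n)=\Omega(M(n)\,n^{-1-\epsilon})$ and the aggregate throughput is $T(n)=(n/2)R(n)=\Omega(M(n)\,n^{-\epsilon})=\Omega(n\,d_{\text{hop}}\,n^{-\epsilon}/\log n)$, which reduces to the two branches of (\ref{EQ:Tn_dense}) after substituting (\ref{EQ:d_hop}). The step I expect to be the main obstacle is (a): individual inter-pair interferers can realize an antenna gain $G_{m}^{2}=\Theta(1/\theta^{2})$ with probability only $\Theta(\theta^{2})$, so the naive Markov/Chebyshev concentration of $I^{[2]}_{p,l}(s)$ is lossy for narrow beams, and reconciling this loss with the union-bound cost across all hops, pairs, and time slots is precisely what drives the arbitrarily small $\epsilon>0$ in the final scaling.
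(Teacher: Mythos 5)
Your overall architecture (per-link rate bound, choice of $M(n)=\Theta(n\,d_{\text{hop}}/\log n)$, Lemma~\ref{LMM:typicality} to convert per-slot activity into time-averaged rate, then substitution of (\ref{EQ:d_hop})) matches the paper, and your treatment of the signal power and intra-pair interference is fine. But your ingredient (a) has a genuine gap, and you have correctly identified where it is without resolving it. The uniform bound ``$I^{[2]}_{p,l}(s)=O(n^{\epsilon})$ for all $p$, $l$, $s$ w.h.p.'' cannot be obtained from Markov's inequality plus a union bound: Markov gives a per-event failure probability of only $\Theta(n^{-\epsilon})$, which does not survive a union bound over $T_S=n^4$ slots and $\Theta(n^{3/2})$ (pair, hop) indices. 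Worse, the uniform statement is simply \emph{false} in Regime I when $\theta^{-1}$ grows polynomially: a tier-one interferer aligns its mainlobe with ${\sf rx}_{p,l}$ with probability $\Theta(\theta^2)$ per slot (see (\ref{EQ:GAIN_PR})), contributing interference of order $G_m^2=\Theta(\theta^{-2})$, which is polynomially large in $n$; over $n^4$ i.i.d.\ slots this happens in a polynomially large number of slots. Pushing the Markov threshold high enough to survive the union bound forces $\text{SINR}=\Omega(n^{-c})$ for a fixed constant $c>0$ determined by $\alpha$ and the $\theta$ scaling, not an arbitrarily small $\epsilon$, which destroys the claimed scaling. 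Your closing remark that this loss ``drives the arbitrarily small $\epsilon$'' is therefore not correct -- in the paper the $n^{-\epsilon}$ only absorbs the polylogarithmic factor in $M(n)=\Theta(d_{\text{hop}}n/\log n)$.

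The paper avoids this obstacle by never requiring a per-slot interference bound. It lower-bounds the \emph{time-averaged} rate $\frac{1}{T_S}\sum_{s\in\Phi_p}\log(1+\text{SINR}_{\min}(s))$: since $I^{[2]}_{\max}(s)$ is i.i.d.\ across slots and $T_p$ grows with $n$, this average concentrates around $\frac{T_p}{T_S}\mathbb{E}[\log(1+\text{SINR}_{\min}(s))]$, and then Jensen's inequality (convexity of $\log(1+a/x)$ in $x$) moves the expectation onto the interference, yielding
\begin{align}
\mathbb{E}\left[\log\left(1+\frac{a}{b+I^{[2]}_{\max}(s)}\right)\right]\ \geq\ \log\left(1+\frac{a}{b+\mathbb{E}\left[I^{[2]}_{\max}(s)\right]}\right), \nonumber
\end{align}
with $\mathbb{E}[I^{[2]}_{\max}(s)]=O(1)$ from Lemma~\ref{LEM:Int1}. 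The rare slots with huge interference then cost only a constant factor in the average rate, because $\log(1+\text{SINR})\geq 0$ and the expectation is taken before, not after, the logarithm's denominator. You should replace your Markov/union-bound step with this average-plus-Jensen argument. Your ingredient (b) on cell loading is a reasonable supplementary check that the paper leaves implicit, and ingredient (c) matches the paper.
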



\begin{proof}
First of all, we set $T_S=n^4$, which satisfies the condition in
Lemma \ref{LMM:typicality}. Since the per-node throughput $R(n)$
in Definition \ref{def:throughput} is defined as the average rate
per each SD pair over the entire time slots,

\begin{align} \label{EQ:SINR11}
R(n)&\geq \frac{1}{T_S}\min_{p,l}\left\{\sum_{s\in\{1,\cdots,T_S\}}\log\left(1+\text{SINR}_{p,l}(s)\right)\right\}\nonumber\\
&=
\frac{1}{T_S}\min_{p,l}\left\{\sum_{s\in\Phi_p}\log\left(1+\text{SINR}_{p,l}(s)|s\in\Phi_p\right)\right\}
\end{align}
is achievable, where the minimum is taken over all pairs
$p\in\{1,\cdots,n/2\}$ and hops $l\in\mathcal{H}_p$. Here, the
equality holds since $\text{SINR}_{p,l}(s)=0$ if $s\notin\Phi_p$
(see \eqref{EQ:SINR} and \eqref{EQ:Pr1}).

Then, from (\ref{EQ:SINR}),
\eqref{EQ:Pr2}--\eqref{EQ:Inter_Pair_Interference},
$\text{SINR}_{p,l}(s)$ can be lower-bounded by
\begin{align}
\text{SINR}_{\min} (s)\triangleq \frac{c_0G_m^2 \left(\frac{\log
n}{d_{\text{hop}}^2 n} \right)^{\alpha / 2}}{N_0+c_1 G_m^2
\left(\frac{\log n}{d_{\text{hop}}^2 n} \right)^{\alpha /
2}\!\!\!\!\!+I^{[2]}_{\max}(s)},
\end{align}
which does not rely on parameters $p$ and $l$, if $s\in\Phi_p$ for
all $p\in\{1,\cdots,n/2\}$ and $l\in\mathcal{H}_p$, where
\begin{align}
I^{[2]}_{\max}(s)\triangleq P\sum_{t=1}^{\infty}
\sum_{i=1}^{8t}t^{-\alpha} X_{p,l,t,i}(s). \nonumber
\end{align}
Hence, one can obtain
\begin{align}
&R(n)\nonumber\\
&\geq  \frac{1}{T_S}\sum_{s\in\Phi_p}\log\left(1+\text{SINR}_{\min} (s)|s\in\Phi_p\right)\nonumber\\
&\!\!\!\!\!\overset{\mbox{\footnotesize w.h.p.}}{\geq}\frac{T_p}{T_S}\mathbb{E}\left[\log\left(1+\text{SINR}_{\min} (s)|s\in\Phi_p\right)\right]\nonumber\\
&\geq \frac{T_p}{T_S}\log\left(1+\frac{c_0G_m^2 \left(\frac{\log
n}{d_{\text{hop}}^2 n} \right)^{\alpha / 2}}{N_0+c_1 G_m^2
\left(\frac{\log n}{d_{\text{hop}}^2 n} \right)^{\alpha /
2}\!\!\!\!\!+\mathbb{E}\left[I^{[2]}_{\max}(s)\right]}\right)\nonumber\\
&\!\!\!\!\!\overset{\mbox{\footnotesize
w.h.p.}}{\geq}\frac{T_p}{T_S}\log\left(1+\frac{c_0G_m^2
\left(\frac{\log n}{d_{\text{hop}}^2 n} \right)^{\alpha /
2}}{N_0+c_1 G_m^2 \left(\frac{\log n}{d_{\text{hop}}^2 n}
\right)^{\alpha /
2}\!\!\!\!\!+c_2}\right)\nonumber\\
&\!\!\!\!\!\overset{\mbox{\footnotesize
w.h.p.}}{\geq}\left(\frac{2M(n)}{n}-\frac{1}{n}\right)\log\left(1+\frac{c_0G_m^2
\left(\frac{\log n}{d_{\text{hop}}^2 n} \right)^{\alpha /
2}}{N_0+c_1 G_m^2 \left(\frac{\log n}{d_{\text{hop}}^2 n}
\right)^{\alpha / 2}\!\!\!\!\!+c_2}\right) \label{EQ:R_n_lower}
\end{align}
is achievable w.h.p., where $d_{\text{hop}}$ is given by
(\ref{EQ:d_hop}) and $c_0$, $c_1$, and $c_2$ are some positive
constants. Here, the first inequality holds since
$\text{SINR}_{\min} (s)$ is the same for all
$p\in\{1,\cdots,n/2\}$ and $l\in\mathcal{H}_p$, the second
inequality holds since $I^{[2]}_{\max}(s)$ is independent and
identically distributed (i.i.d.) for $s$ and $T_p$ in an
increasing function of $n$ (refer to Lemma \ref{LMM:typicality}),
the third inequality holds by Jensen's inequality since
$\log(1+a/x)$ is convex in $x$ for all $a>0$, the fourth
inequality holds from Lemma \ref{LEM:Int1}, and the fifth
inequality holds from Lemma \ref{LMM:typicality}.

Since
\begin{align}  \label{EQ:avgTn}
G_m^2 \left(\frac{\log n}{d_{\text{hop}}^2 n} \right)^{\alpha / 2}
= \left\{ \begin{array}{ll} \Theta(1) & \mbox{if
$\theta^{-1}\!\!=\!\!o\left(\left(\frac{n}{\log
n}\right)^{\alpha/4}\right)$}\\\Omega(1) & \mbox{if
$\theta^{-1}\!\!=\!\!\Omega\left(\left(\frac{n}{\log
n}\right)^{\alpha/4}\right)$} \end{array} \right.
\end{align}
from (\ref{EQ:d_hop}), by substituting  \eqref{EQ:avgTn} into
\eqref{EQ:R_n_lower}, it is shown that
$R(n)=\Omega\left(\frac{M(n)}{n}\right)$ is achievable w.h.p. and,
as a consequence,  $T(n)= \Omega\left(M(n)\right)$ is achievable
w.h.p. Finally, from the fact that
\begin{align}
M(n)=\Theta\left(\frac{n}{\bar{h}\log
n}\right)=\Theta\left(\frac{d_{\text{hop}}n}{\log n}\right)
\nonumber
\end{align}
and (\ref{EQ:d_hop}), the aggregate throughput achievable by
elastic routing is given by (\ref{EQ:Tn_dense}) w.h.p. This
completes the proof of the theorem.
\end{proof}

\begin{figure}[t!]
  \begin{center}
  \leavevmode \epsfxsize=0.46\textwidth   
  \leavevmode 
\epsffile{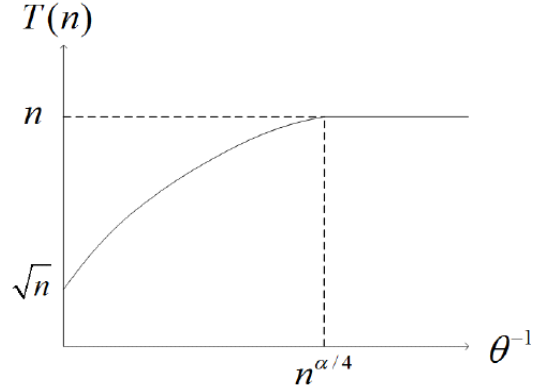} \caption{The aggregate throughput
scaling $T(n)$, achieved by elastic routing, with respect to the
inverse of the beam width, $1/\theta$.} \label{FIG:Throughput}
  \end{center}
\end{figure}


The aggregate throughput scaling $T(n)$ is illustrated in
Fig.~\ref{FIG:Throughput} according to the scaling parameter
$1/\theta$ (the terms $\epsilon$ and $\log n$ are omitted for
notational convenience). From Theorem~\ref{THM:ELASTIC}, each
operating regime is now closely scrutinized.
When $\theta^{-1} = o\big( ( n/\log n)^{\alpha/4} \big)$ (Regime
I), the throughput scaling gets increased as the beam width of the
directional antenna becomes narrower. This is the regime where the
proposed elastic routing provides a significant throughput
improvement over the nearest-neighbor multihop transmission with
increasing antenna gain. If $\theta^{-1}=\Omega \left(n/\log
n\right)^{\alpha/4}$ (Regime II), then a linear throughput scaling
is achieved, which corresponds to the fundamental limit of the
network under consideration and is consistent with the previous
work (including the beam width scaling condition) based on the
protocol model~\cite{LiZhangFang:11}. This is because all SD pairs
can be simultaneously activated with no degradation on the
received SINR, which scales as $\Omega(1)$, while maintaining
per-node throughput as a constant.
Our result is thus general in the sense that the achievable scheme
and its throughput are shown for all operating regimes with
respect to $\theta$ (i.e., for an arbitrary scaling of $\theta$).


%
%

From the achievability result, the following two interesting
discussions are also shown.

\begin{remark}[Delay analysis] \label{Rem:delay}
An improved aggregate throughput leads to a delay reduction owing
to elastic routing, whereas, by using the conventional multihop
scheme in the omnidirectional mode, the delay increases
proportionally with throughput
$T(n)$~\cite{ElGamalMammenPrabhakarShah:06}. \hfill$\lozenge$
\end{remark}

\begin{remark}[Ideal antenna model]
For comparison, let us consider an ideal antenna model. It can be
straightforwardly shown that Lemma~\ref{LEM:Int1} also holds when
there is no sidelobe gain~\cite{ZhangXuWangGuizani:10}, i.e.,
$G_s=0$. This reveals that, under this ideal model, the aggregate
throughput $T(n)$ is given by (\ref{EQ:Tn_dense}) as long as
$G_s=O(1)$. Therefore, the existence of sidelobe beams, whose gain
scales as $O(1)$, does not cause any throughput loss in scaling
law. \hfill$\lozenge$
\end{remark}

\begin{figure}[t!]
\centering{%
\subfigure[Omnidirectional antenna]{%
\epsfxsize=0.43\textwidth \leavevmode \label{FIG:delay_elgamal}
\epsffile{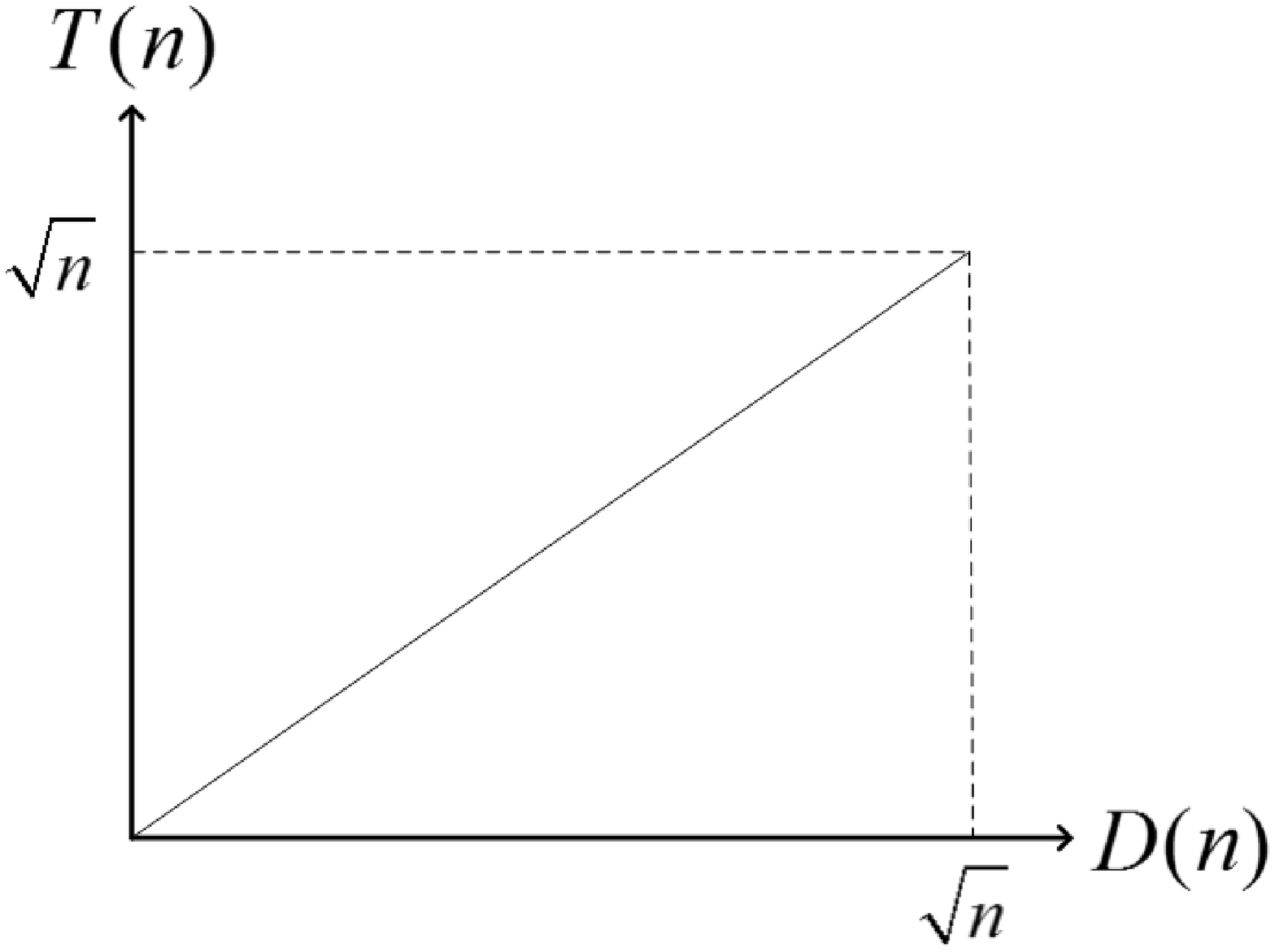}}\vspace{.0cm}
\subfigure[Directional antenna]{%
\epsfxsize=0.43\textwidth \leavevmode \label{FIG:delay_elastic}
\epsffile{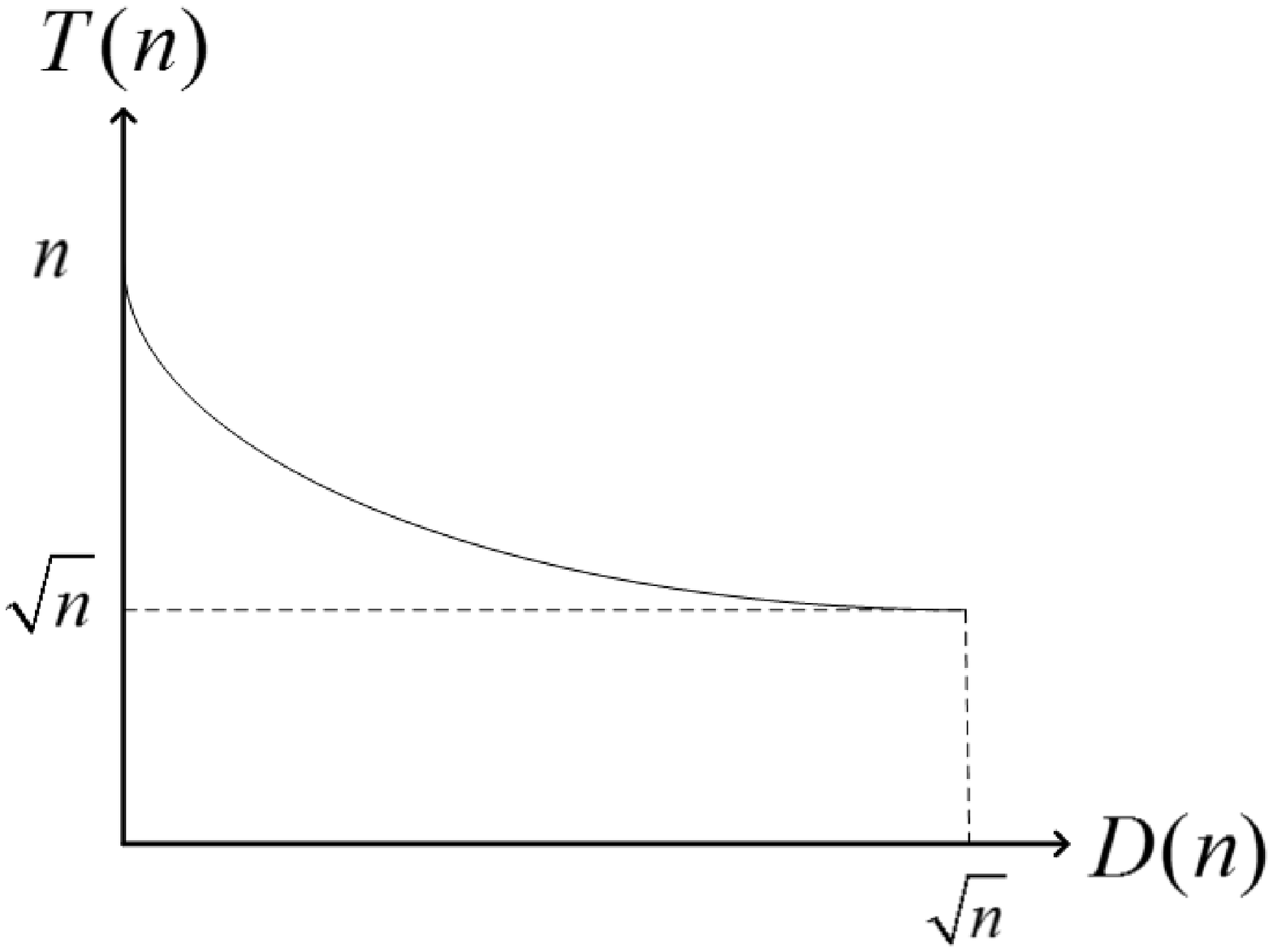}}} \caption{The trade-off between the
network delay $D(n)$ and the aggregate throughput $T(n)$ according
to the types of the transmit--receive antenna.} \label{FIG:delay}
\end{figure}

To be specific, with regard to Remark~\ref{Rem:delay}, when we
define the network delay $D(n)$ as the average number of hops per
SD pair, denoted by $\bar{h}$, a trade-off between the network
delay scaling and the aggregate throughput scaling can be
illustrated in Fig. \ref{FIG:delay}. In the omnidirectional mode,
the network throughput increases up to $\Theta(\sqrt{n})$ while
the network delay gets also
increased~\cite{ElGamalMammenPrabhakarShah:06}. On the other hand,
with the use of directional antennas, the proposed elastic routing
enables to increase the per-hop distance during the packet
transmission, resulting in the reduced network delay and the
increased number of simultaneously active SD pairs.

\subsection{Throughput Scaling in Extended Networks}

We now analyze the throughput scaling for extended networks using
the elastic routing protocol. From the fact that
Lemma~\ref{LEM:Int1} also holds for extended networks, the
following theorem establishes our second main result.

\begin{theorem}
In the extended ad hoc network of unit node density, the aggregate
throughput achieved by elastic routing is identical to the dense
network case as in Theorem \ref{THM:ELASTIC}.
\end{theorem}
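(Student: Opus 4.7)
The plan is to transplant the dense-network argument of Theorem~\ref{THM:ELASTIC} into the extended-network setting, showing that the change in geometric scale is exactly compensated by the change in transmit power. Concretely, in the extended network the per-hop distance is scaled up by $\sqrt{n}$ relative to the dense network (compare \eqref{EQ:d_hop} and \eqref{EQ:d_hop_extended}), each node uses full power $P$ instead of $P(\log n/n)^{\alpha/2}$, and a routing cell of smaller area $2\log n$ (rather than $2\log n/n$) guarantees at least one node w.h.p. The scheduling layer of Lemma~\ref{LMM:typicality} is invariant under this rescaling, so the same bound $T_p/T_S\ge 2M(n)/n-1/n$ continues to hold.

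First I would recompute the received signal power $P_{p,l}(s)$ when $s\in\Phi_p$. With transmit power $P$, mainlobe--mainlobe beamforming, and hop length scaled by $\sqrt{n}$, the channel gain magnitude squared carries the factor $d_{\text{hop}}^{-\alpha}$, so
\begin{align}
P_{p,l}(s)=\Omega\!\left(G_m^2\,P\,d_{\text{hop}}^{-\alpha}\right)\quad\text{if }s\in\Phi_p. \nonumber
\end{align}
Next I would handle the intra-pair interference exactly as in \eqref{EQ:P_IHI}: the $l$th ring of intra-pair interferers contributes at most $2G_m^2 P(l\cdot d_{\text{hop}})^{-\alpha}$, and summing the geometric-type series in $l$ (for $\alpha>2$) again gives $I^{[1]}_{p,l}(s)=O(G_m^2 P\,d_{\text{hop}}^{-\alpha})$, i.e.\ the same order as the signal.

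For the inter-pair interference, I would reuse the 9-TDMA tiering picture in Fig.~\ref{FIG:I_MODEL}, now with each tier a shell of radius $\Theta(t)$ (instead of $\Theta(t\sqrt{\log n/n})$) because the smaller regions now have area $\Theta(\log n)$. The path-loss factor $t^{-\alpha}$ combined with full power $P$ yields
\begin{align}
I^{[2]}_{p,l}(s)\le P\sum_{t=1}^{\infty}\sum_{i=1}^{8t} t^{-\alpha} X_{p,l,t,i}(s), \nonumber
\end{align}
which is \emph{identical} to \eqref{EQ:Inter_Pair_Interference}. Hence Lemma~\ref{LEM:Int1} applies verbatim, giving $\mathbb{E}[I^{[2]}_{p,l}(s)]=O(1)$. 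Dividing signal by (noise + intra-pair + inter-pair) shows that the minimum SINR is $\Theta(1)$ in Regime~I and $\Omega(1)$ in Regime~II, exactly as in \eqref{EQ:avgTn}.

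Finally, plugging this $\Omega(1)$ SINR into the Jensen-inequality chain of \eqref{EQ:R_n_lower} yields $R(n)=\Omega(M(n)/n)$, and the relation $M(n)=\Theta(d_{\text{hop}} n/(\sqrt{n}\log n))=\Theta(\sqrt{n}\,\theta^{-2/\alpha}/\log n)$ (using \eqref{EQ:d_hop_extended} and the fact that SD lines have length $\Theta(\sqrt{n})$, so the number of hops is $\bar h=\Theta(\sqrt{n}/d_{\text{hop}})$) produces an aggregate throughput $T(n)$ matching \eqref{EQ:Tn_dense}. The main obstacle I anticipate is the bookkeeping that confirms that the two simultaneous rescalings --- transmit power up by $(n/\log n)^{\alpha/2}$ and per-hop distance up by $\sqrt{n}$ --- cancel identically in the signal, intra-pair, and inter-pair terms, so that the SINR expression and all subsequent scaling laws are literally the same as in the dense case; once this cancellation is verified, the remainder of the argument is a direct copy of the proof of Theorem~\ref{THM:ELASTIC}.
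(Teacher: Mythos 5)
Your overall strategy---transplanting the dense-network argument with the rescaled power and hop distance---is exactly what the paper does, and your treatment of the intra-pair and inter-pair interference (both still $O(1)$ after the rescaling, so Lemma~\ref{LEM:Int1} carries over) is sound. However, the step you yourself flag as ``the main obstacle'' is carried out incorrectly: the two rescalings do \emph{not} cancel identically. The transmit power goes up by $(n/\log n)^{\alpha/2}$ while the path loss goes up by $d_{\text{hop}}^{\alpha}\sim n^{\alpha/2}$, so the received signal power drops by a factor of $(\log n)^{\alpha/2}$ relative to the dense case. Concretely, with $d_{\text{hop}}=\Theta\bigl(\sqrt{\log n}\,\theta^{-2/\alpha}\bigr)$ in Regime~I one gets
\begin{align}
P_{p,l}(s)=\Theta\!\left(G_m^2\,P\,d_{\text{hop}}^{-\alpha}\right)
=\Theta\!\left(\theta^{-2}\cdot(\log n)^{-\alpha/2}\theta^{2}\right)
=\Theta\!\left(\frac{1}{(\log n)^{\alpha/2}}\right), \nonumber
\end{align}
not $\Theta(1)$, so the per-hop SINR is $\Theta\bigl((\log n)^{-\alpha/2}\bigr)$ rather than $\Theta(1)$ ``exactly as in \eqref{EQ:avgTn}.'' This is precisely the $\Theta\bigl(G_m^2/d_{\text{hop}}^{\alpha}\bigr)=\Theta\bigl(1/(\log n)^{\alpha/2}\bigr)$ appearing in the paper's proof, reflecting the fact that the extended network is power-limited.

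The conclusion survives because $\log\bigl(1+\Theta((\log n)^{-\alpha/2})\bigr)=\Theta\bigl((\log n)^{-\alpha/2}\bigr)$ is only a polylogarithmic degradation, which is absorbed into the $n^{-\epsilon}$ slack in \eqref{EQ:Tn_dense}; the paper makes this explicit in a footnote. So your proof needs to replace the claim of exact cancellation with the correct computation above and then note that the resulting polylog loss is negligible at the level of the stated scaling law. (Two smaller points: the 9-TDMA tiers in the extended network sit at distance $\Theta(t\sqrt{\log n})$, not $\Theta(t)$, since the smaller regions have side $\Theta(\sqrt{\log n})$---your inter-pair bound is still a valid upper bound, just loose by a $(\log n)^{-\alpha/2}$ factor; and your expression $M(n)=\Theta(\sqrt{n}\,\theta^{-2/\alpha}/\log n)$ is off by a $\sqrt{\log n}$ factor, which is again immaterial to the scaling.)
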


\begin{proof}
In the extended network, by following the same analysis as in
(\ref{EQ:SINR11})--(\ref{EQ:R_n_lower}), we have
\begin{align}
R(n)\overset{\mbox{\footnotesize
w.h.p.}}{\geq}\left(\frac{2M(n)}{n}-\frac{1}{n}\right)\log\left(1+\frac{\frac{c_3G_m^2}{d_{\text{hop}}^\alpha}}{N_0+\frac{c_4G_m^2}{d_{\text{hop}}^\alpha}+
c_5}\right),
\end{align}
where $d_{\text{hop}}$ is given by (\ref{EQ:d_hop_extended}) and
$c_3$, $c_4$, and $c_5$ are some positive constants. Since
\begin{align}
\frac{G_m^2}{d_{\text{hop}}^\alpha} = \left\{
\begin{array}{ll} \Theta\left(\frac{1}{(\log n)^{\alpha/2}}\right) & \mbox{if
$\theta^{-1}\!\!=\!\!o\left(\left(\frac{n}{\log
n}\right)^{\alpha/4}\right)$}\\ \Omega\left(\frac{1}{(\log
n)^{\alpha/2}}\right) & \mbox{if
$\theta^{-1}\!\!=\!\!\Omega\left(\left(\frac{n}{\log
n}\right)^{\alpha/4}\right)$} \end{array} \right. \nonumber
\end{align}
from (\ref{EQ:d_hop_extended}), it follows that $R(n)=
\Omega\left(\frac{M(n)}{n}\right)$ and $T(n)=
\Omega\left(M(n)\right)$ are achievable w.h.p., which completes
the proof of the theorem.\footnote{The throughput degradation up
to a polylogarithmic term, coming from a power limitation, is
omitted since it is negligible.}
\end{proof}

As in the achievability result based on the nearest-neighbor
multihop in the omnidirectional mode~\cite{GuptaKumar:00}, note
that, when elastic routing is used in the network, the aggregate
throughput scaling for the {extended} network is the same as the
{dense} network configuration for all operating regimes with
respect to $\theta$. That is, the network configuration type does
not essentially change our scaling result as long as packet
forwarding protocols such as the nearest-neighbor multihop routing
and elastic routing are concerned.

\section{Extension to Hybrid Networks: The Use of Infrastructure} \label{SEC:DISCUSS}


\begin{figure*}[t!]
  \begin{center}
  \leavevmode \epsfxsize=0.79\textwidth   
  \leavevmode 
\epsffile{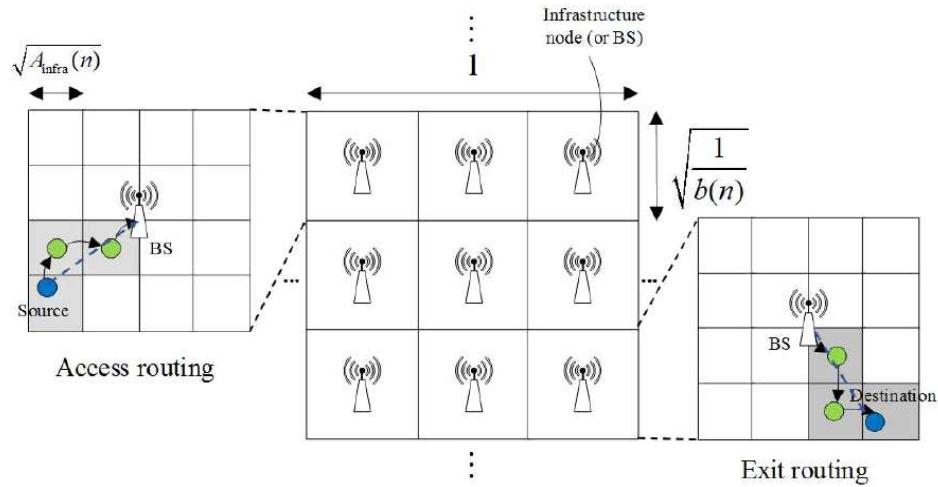} \caption{The hybrid network
with infrastructure support, where the dense network is assumed.}
\label{FIG:Infra}
  \end{center}
\end{figure*}

In this section, we consider hybrid networks by deploying
infrastructure aiding wireless nodes. Such hybrid networks,
consisting of both ad hoc nodes and infrastructure nodes, have
been introduced, and their throughput scaling laws were analyzed
in~\cite{KozatTassiulas:03,ZemlianovVeciana:05,LiuLiuTowsley:03,LiuThiranTowsley:07,ShinJeonDevroyeVuChungLeeTarokh:08,LiFang:10,JeonChung:12}.
In a hybrid network equipping directional antennas at each node,
we analyze the impact and benefits of the proposed elastic routing
in further improving the throughput scaling.

\subsection{System Model}

The whole network area is divided into $b(n)$ square cells, each
of which is covered by one BS equipped with a single directional
antenna at its center~(see Fig.~\ref{FIG:Infra}), which similarly
follows the system model
in~\cite{KozatTassiulas:03,ZemlianovVeciana:05,LiuLiuTowsley:03,LiuThiranTowsley:07}.
For analytic convenience, let us state that parameters $n$ and
$b(n)$ are related according to $b(n) = n^{\gamma}$ for $\gamma
\in [0,1)$. Moreover, as in
\cite{KozatTassiulas:03,ZemlianovVeciana:05,LiuLiuTowsley:03,LiuThiranTowsley:07,ShinJeonDevroyeVuChungLeeTarokh:08},
it is assumed that BSs are connected to each other by wired
infrastructure with infinite bandwidth (i.e., infinite capacity)
and that they are neither sources nor destinations. We assume that
each BS has an average transmit power constraint $P$ (constant)
and CSI is available at the receive side including the receive
BSs, but not at the transmit side including the transmit BSs. We
do not assume the use of any sophisticated multiuser detection
schemes at each receiver, thereby resulting in an easier
implementation.

\subsection{Infrastructure-Supported Elastic Routing Protocol}

We state our result based on the dense network model, but the
overall procedure can be straightforwardly applied to the extended
network configuration. Now, we describe a new elastic routing
scheme utilizing infrastructure, i.e., BSs. For the
infrastructure-supported elastic routing scheme, each square cell
is tessellated into several square {\em routing cells} having the
area of $A_{\text{infra}}(n)$ (which will be specified later). As
illustrated in Fig.~\ref{FIG:Infra}, the infrastructure-supported
elastic routing consists of three phases: {\em access routing},
{\em BS-to-BS transmission}, and {\em exit routing}. In the access
routing, a source node transmits its packet to the nearest BS. The
packet is then delivered to another BS that is nearest to the
destination of the source via wired link. The packet is received
at the destination from the BS in the exit routing. To avoid a
large amount of interference, different time slots are used
between not only routing schemes with and without infrastructure
support but also access and exit routings. The
infrastructure-supported elastic routing protocol is described
more specifically as follows:

\begin{itemize}
\item Divide the network into equal square cells of area $1/b(n)$,
each having one BS at the center of each cell, and again divide
each cell into smaller square routing cells of area
$A_{\text{infra}}(n)$, specified in
(\ref{EQ:RoutingRegion_Infra}).

\item In the access routing, one source in each cell transmits its
packets to the corresponding BS via the elastic routing, using one
of the nodes in each adjacent routing cell. We draw the straight
line connecting a source to its nearest BS. For convenience, we
term the line a ``source--BS line''. Then, packets for an SD pair
travel horizontally or vertically along its source--BS line by
hopping along adjacent routing cells of area $A_{\text{infra}}(n)$
until they reach the nearest BS. Similarly as in the pure ad hoc
transmission case, while travelling along a source--BS line, a
certain node in each routing cell is arbitrarily selected as a
relay forwarding the packets. As in the omnidirectional mode, at
each hop, a transmit power of $P(\log n / n)^{\alpha/2}$ is used,
and the antennas of each selected transmitter--receiver pair are
steered so that their beams cover each other.



\item The BS that completes decoding its packets transmits them to
the BS closest to the corresponding destination by wired BS-to-BS
links.

\item In the exit routing, similarly as in the access routing, the
infrastructure-supported elastic routing from a BS to the
corresponding destination is performed, where each BS uses power
$P(\log n / n)^{\alpha/2}$ that satisfies the power constraint. We
draw the straight line connecting a destination to its nearest BS.
Then, packets for an SD pair travel horizontally or vertically
along the drawn BS--destination line by hopping along adjacent
routing cells of area $A_{\text{infra}}(n)$ until they reach their
destination. At each hop, the antennas of each selected
transmitter--receiver pair are also steered so that their beams
cover each other.
\end{itemize}

When $b(n)$ BSs are deployed over the hybrid network, the maximum
distance that packets travel through air interface is limited by
$O\left(\sqrt{\frac{1}{b(n)}} \right)$ if an
infrastructure-supported routing protocol is used. Thus, similarly
as in the ad hoc network case, the area of the routing cell,
$A_{\text{infra}}(n)$, is given by
\begin{align}
A_{\text{infra}}(n) = \Theta\left(d_{\text{infra-hop}}^2\right) =
\Theta\left(\frac{1}{b(n)\bar{h}_{\text{infra}}^2}\right)
\label{EQ:RoutingRegion_Infra}
\end{align}
in the dense hybrid network, where $d_{\text{infra-hop}}$ and
$\bar{h}_{\text{infra}}$ denote the average per-hop distance and
the average number of hops, respectively, in the
infrastructure-supported elastic routing. For given
$A_{\text{infra}}(n)$, the average per-hop distance
$d_{\text{infra-hop}}$ can be determined using
(\ref{EQ:RoutingRegion_Infra}) and is shown according to the value
of $\theta$ as follows:
\begin{align}
&d_{\text{infra-hop}} = \nonumber \\
& \left\{
\begin{array}{ll}
{\Theta\left(\sqrt{\frac{\log n}{ n}}
\theta^{-\frac{2}{\alpha}}\right)}
&\textrm{if
$\theta^{-1}\!\!=\!\!o\left(\left(\frac{n}{b(n)\log n}\right)^{\alpha/4}\right)$} \\
{\Theta\!\left(\sqrt{\frac{1}{b(n)}}\right)}
&\textrm{if $\theta^{-1}\!\!=\!\!\Omega\left(\left(\frac{n}{b(n)
\log n}\right)^{\alpha/4}\right)$},
\end{array}
\right. \label{EQ:d_hop_infra}
\end{align}
where $b(n) = n^{\gamma}$ for $\gamma \in [0,1)$, which will be
verified in the next subsection.

Note that, for the given transmit power (i.e., the transmit power
$P(\log n / n)^{\alpha/2}$), per-hop distance in the
infrastructure-supported elastic routing is longer than that in
the conventional infrastructure-supported multihop of the
omnidirectional
mode~\cite{KozatTassiulas:03,ZemlianovVeciana:05,LiuLiuTowsley:03,LiuThiranTowsley:07}
for all $\theta$.

\subsection{Throughput Scaling}

In the following theorem, we establish our third main result,
which presents the aggregate throughput $T(n)$ with respect to
$\theta^{-1}$ when the proposed elastic routing protocol is used
in the dense hybrid network equipping directional antennas.

\begin{theorem} \label{THM:INFRA}
In the dense network using infrastructure, the aggregate
throughput achieved by elastic routing with and without BS support
is given by
\begin{align} \label{EQ:Tn_infra}
\lefteqn{T(n) = \Omega \left(\min\left\{
\max\left\{\sqrt{n}\theta^{-2/\alpha},b(n)\right\},n\right\}n^{-\epsilon}\right)} & \nonumber\\
&=\left\{
\begin{array}{ll}
\Omega\left(\max\{n^{1/2-\epsilon},b(n)\}\right) \\
~~~~~~~~~~~~~\textrm{if $\theta^{-1} = o \left(b(n)^{\alpha/2}
\left(\frac{\log n}{n}
\right)^{\alpha/4} \right)$} \\
\Omega\left(n^{1/2-\epsilon}\theta^{-2/\alpha}\right) \\
~~~~~~~~~~~~~ \textrm{if $\theta^{-1} = \Omega
\left(b(n)^{\alpha/2} \left(\frac{\log n}{n} \right)^{\alpha/4}
\right)$} \\~~~~~~~~~~~~~ \textrm{and $\theta^{-1} =
o\left(\left(\frac{n}{\log n}
\right)^{\alpha/4}\right)$} \\
\Omega\left(n^{1-\epsilon}\right) \\ ~~~~~~~~~~~~~ \textrm{if
$\theta^{-1}=\Omega\left(\left(\frac{n}{\log n}
\right)^{\alpha/4}\right)$}
\end{array}
\right.
\end{align}
w.h.p., where $\epsilon>0$ is an arbitrarily small constant.
\end{theorem}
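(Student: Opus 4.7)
The plan is to establish the bound by combining two achievability schemes and keeping whichever delivers the higher rate at each scaling of $\theta$. The first scheme is the pure ad hoc elastic routing of Section \ref{SEC:routing}, which by Theorem \ref{THM:ELASTIC} already yields $T(n)=\Omega(\min\{\sqrt{n}\theta^{-2/\alpha},n\}n^{-\epsilon})$ by simply ignoring the BSs. The second is the infrastructure-supported elastic routing of Section \ref{SEC:DISCUSS}, which I will argue achieves $T(n)=\Omega(b(n)n^{-\epsilon})$. Time-sharing the access, BS-to-BS, and exit phases with constant fractions preserves the rate of each phase since the BS-to-BS backbone has infinite capacity, and the exit phase is handled symmetrically to the access phase.

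For the infrastructure-supported scheme I would restrict attention to the access phase and show a per-cell aggregate rate of $\Theta(1)$. Within each of the $b(n)$ cells of area $1/b(n)$, run the elastic protocol of Section \ref{SEC:routing} from a source to its local BS with per-hop distance $d_{\text{infra-hop}}$ from (\ref{EQ:d_hop_infra}), using the same tessellation into smaller regions of area $2\log n/n$, the same 9-TDMA schedule, and the same mutual beam-steering as in the proof of Theorem \ref{THM:ELASTIC}. The SINR analysis then mirrors (\ref{EQ:SINR11})--(\ref{EQ:R_n_lower}): the intra-path bound (\ref{EQ:P_IHI}) is unchanged because at most two own-path hops lie in each concentric annulus around ${\sf rx}_{p,l}$, and the inter-cell interference bound of Lemma \ref{LEM:Int1} carries over because the tier decomposition of Fig.~\ref{FIG:I_MODEL} and the identity $\mathbb{E}[X_{p,l,t,i}(s)]=\Theta(1)$ from (\ref{EQ:GAIN_PR3}) depend only on $G_m=\Theta(1/\theta)$ and $G_s=\Theta(1)$, not on whether interferers lie inside or outside a given cell. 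Each active hop thus supports rate $\Omega(1)$; round-robin over the $\Theta(n/b(n))$ sources per cell delivers $\Omega(b(n)/n)$ per source once pipelining across $\bar{h}_{\text{infra}}$ hops is invoked, and running all $b(n)$ cells in parallel after a constant-sized inter-cell TDMA sums to $\Omega(b(n)n^{-\epsilon})$.

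Finally, taking the maximum of the two sub-scheme bounds partitions the parameter space into exactly the three regimes stated. Comparing $\sqrt{n}\theta^{-2/\alpha}$ with $b(n)$ yields the crossover $\theta^{-1}=b(n)^{\alpha/2}(\log n/n)^{\alpha/4}$ (the $\log n$ factor absorbed into $n^{-\epsilon}$) that separates the BS-dominated and ad hoc-dominated portions of the sublinear region, while comparing $\sqrt{n}\theta^{-2/\alpha}$ with $n$ gives $\theta^{-1}=(n/\log n)^{\alpha/4}$ beyond which the pure ad hoc scheme alone reaches the linear $\Omega(n^{1-\epsilon})$ ceiling; in the narrowest-beam regime the infrastructure adds nothing in scaling order. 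The main obstacle is the middle paragraph: verifying that none of the constants hidden in Lemma \ref{LEM:Int1} pick up a $b(n)$-dependent factor when the interfering sets become cell-structured rather than uniformly random across the whole network, and that the per-cell round-robin schedule over $\Theta(n/b(n))$ sources indeed delivers $\Theta(b(n)/n)$ per source after the 9-TDMA-based pipelining; once these are in place, the combination with Theorem \ref{THM:ELASTIC} and Lemma \ref{LMM:typicality} is a direct transcription of the argument that established (\ref{EQ:Tn_dense}).
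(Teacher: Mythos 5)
Your proposal is correct and follows essentially the same route as the paper: lower-bounding $T(n)$ by $\max\{T_{\text{infra}}(n),T_{\text{adhoc}}(n)\}$, reusing Theorem \ref{THM:ELASTIC} for the ad hoc part, and re-running the SINR analysis of (\ref{EQ:SINR11})--(\ref{EQ:R_n_lower}) with the intra-pair bound (\ref{EQ:P_IHI}) and Lemma \ref{LEM:Int1} to show each of the $b(n)$ simultaneously active source--BS (and BS--destination) lines supports $\Omega(1)$ per hop, giving $T_{\text{infra}}(n)=\Omega(b(n))$. The only cosmetic difference is that you schedule the $\Theta(n/b(n))$ sources per cell by deterministic round-robin while the paper reuses the random-selection/typicality argument of Lemma \ref{LMM:typicality} with $M(n)$ replaced by $b(n)$; both yield the same $\Omega(b(n)/n)$ per-source rate, and your regime boundaries match (\ref{EQ:Tn_infra}).
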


\begin{proof}
We provide a brief sketch of the proof since the proof essentially
follows almost the same line as in Theorem~\ref{THM:ELASTIC}.
Under the hybrid network, the aggregate throughput $T(n)$ is
lower-bounded by
\begin{align}
T(n) \ge \max \{ T_{\text{infra}}(n), T_{\text{adhoc}}(n) \},
\nonumber
\end{align}
where $T_{\text{infra}}(n)$ and $T_{\text{adhoc}}(n)$ denote the
aggregate throughputs achieved by the infrastructure-supported and
pure ad hoc elastic routing protocols, respectively.

From the fact that $T_{\text{adhoc}}(n)$ is given by
(\ref{EQ:Tn_dense}) in Theorem \ref{THM:ELASTIC}, let us now focus
on computing $T_{\text{infra}}(n)$. We start from dealing with the
{\em access routing}. The transmission rate of each hop is
expressed as a function of the SINR value at each receiver that
the packet of a source goes through via the
infrastructure-supported elastic routing until it reaches the
corresponding BS. Let $I^{[3]}_{p,l}(s)$ and $I^{[4]}_{p,l}(s)$
denote the intra-pair and inter-pair interference powers of ${\sf
rx}_{p,l}$ (either an ad hoc node or a BS), respectively, for the
$l$th hop of the source--BS line corresponding to the $p$th SD
pair at time slot $s\in\{1,\cdots, T_S\}$.

Then, by following the same analysis as in (\ref{EQ:P_IHI}) and
(\ref{EQ:Inter_Pair_Interference}), we have
\begin{align}
I^{[3]}_{p,l}(s)& =
O\left(G_m^2 \left(\frac{\log n}{d_{\text{infra-hop}}^2 n}
\right)^{\alpha / 2} \right),
\label{EQ:P_IHI_Infra}\\
I^{[4]}_{p,l}(s) & \le P\sum_{t=1}^{\infty}
\sum_{i=1}^{8t}t^{-\alpha}
Y_{p,l,t,i}(s),\label{EQ:Inter_Pair_Interference_Infra}
\end{align}
where $Y_{p,l,t,i}(s) \in \{G_m^2, G_m G_s, G_s^2\}$ denotes the
antenna gain between ${\sf rx}_{p,l}$ and ${\sf in}_{p,l,t,i}(s)$
at time slot $s$. Under the infrastructure-supported routing
protocol, there are $b(n)$ source--BS lines that are active
simultaneously at each time slot. Thus, similarly as in the proof
of Theorem \ref{THM:ELASTIC}, the achievable per-node transmission
rate during the access routing, $R_{\text{access}}(n)$ is
lower-bounded by
\begin{align}
R_{\text{access}}(n)&\geq\left(\frac{2b(n)}{n}-\frac{1}{n}\right)\nonumber\\
&\cdot\log\left(1+\frac{c_6 G_m^2 \left(\frac{\log
n}{d_{\text{infra-hop}}^2 n} \right)^{\alpha / 2}}{  N_0+c_7 G_m^2
\left(\frac{\log n}{d_{\text{infra-hop}}^2 n} \right)^{\alpha / 2}
+ c_8}\right), \label{EQ:avgTn_Infra}
\end{align}
w.h.p., where $d_{\text{infra-hop}}$ is given by
\eqref{EQ:d_hop_infra} and $c_6$, $c_7$, and $c_8$ are some
positive constants. By substituting (\ref{EQ:d_hop_infra}) into
(\ref{EQ:avgTn_Infra}), it follows that
\begin{align}
R_{\text{access}}(n)=\Omega\left(\frac{b(n)}{n}\right) \nonumber
\end{align}
is achievable w.h.p. In a similar fashion, it can also be shown
that $R_{\text{exit}}(n)=\Omega((b(n)/n)$ is achievable w.h.p.
during the {\em exit routing}. Hence, it is shown that
$R_{\text{infra}}(n)=\Omega\left(\frac{b(n)}{n}\right)$ is
achievable w.h.p. and, as a consequence,  $T_{\text{infra}}(n)=
\Omega\left(b(n)\right)$ is achievable w.h.p. In conclusion, the
aggregate throughput $T(n)$ is given by the result in
(\ref{EQ:Tn_infra}), which completes the proof of the theorem.
\end{proof}

From Theorem \ref{THM:INFRA}, it is shown that, as the number of
BSs $b(n)$ scales slower than $n^{1/2}$, the throughput scaling
and operating regimes remain the same as those of no
infrastructure model (i.e., the ad hoc network model). However,
when $b(n)$ scales faster than $n^{1/2}$, the operating regimes
and the best achievable schemes in each regime are considerably
changed and are summarized as follows:

\begin{itemize}
\item {\bf Regime III}:
$\theta^{-1} = o \left(b(n)^{\alpha/2} \left(\frac{\log
n}{n}\right)^{\alpha/4} \right)$.

The infrastructure-supported elastic routing is used, and its
throughput scaling is given by $b(n)$. In the regime, the use of
directional antennas is not useful in terms of throughput scaling
laws.

\item {\bf Regime IV}:
$\theta^{-1} = \Omega \left(b(n)^{\alpha/2} \left(\frac{\log
n}{n}\right)^{\alpha/4} \right)$ and
$\theta^{-1}=o\left(\left(\frac{n}{\log n}
\right)^{\alpha/4}\right)$.

In Regime IV, the ad hoc elastic routing outperforms the
infrastructure-supported elastic routing. Thus, the use of
infrastructure is not helpful in this regime. The aggregate
throughput scaling
$\Omega\left(n^{1/2-\epsilon}\theta^{-2/\alpha}\right)$ is
achieved, which is the same as that in Regime I assuming no BSs.

\item {\bf Regime V}: 
$\theta^{-1}=\Omega\left(\left(\frac{n}{\log n}
\right)^{\alpha/4}\right)$

The single-hop transmission is performed in this regime where a
linear aggregate throughput scaling is achieved, which shows the
same scaling as Regime II.
\end{itemize}


\begin{figure}[t!]
  \begin{center}
  \leavevmode \epsfxsize=0.46\textwidth   
  \leavevmode 
\epsffile{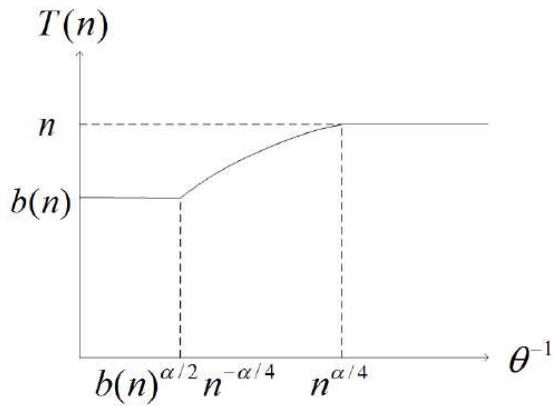} \caption{The aggregate throughput
scaling $T(n)$, achieved by elastic routing, with respect to the
inverse of the beam width, $1/\theta$, where
$b(n)=\Omega(n^{1/2})$ is assumed.} \label{FIG:Infra_Throughput}
  \end{center}
\end{figure}

In the hybrid network assuming that $b(n) = \Omega(n^{1/2})$, the
aggregate throughput scaling $T(n)$ is illustrated in Fig.
\ref{FIG:Infra_Throughput}. Note that, when $b(n) = o(n^{1/2})$,
the throughput scaling is the same as Theorem~\ref{THM:ELASTIC},
which is shown in Fig. \ref{FIG:Throughput}. When the beam width
$\theta$ of each directional antenna is not sufficiently narrow
(Regime III), the throughput scaling is given by the number of
BSs, $b(n)$, where the throughput scaling is improved with the
help of the BSs but the use of directional antennas is not helpful
in further enhancing the throughput performance. As $\theta$
becomes narrower (Regime IV), the throughput scaling is improved
with increasing $\theta^{-1}$ and thus the directional antenna
gain can be attainable, where the pure ad hoc elastic routing is
dominant in the regime. When $\theta^{-1} = \Omega\left(
\left(\frac{n}{\log n}\right)^{-\alpha/4}\right)$ (Regime V), a
linear throughput scaling is achieved, which eventually approaches
the fundamental limit of the network.

In addition, we remark that, since each source/BS can reach the
corresponding BS/destination via one hop using a transmit power of
$P b(n)^{-\alpha/2}\theta^2$, an infrastructure-supported
single-hop directional transmission in each cell can also be
performed while achieving the same throughput scaling law as the
infrastructure-supported elastic routing case.

In the extended hybrid network, a full transmit power $P$ is used
at each node (including each BS) since the network is also
power-limited. Then, one can easily see that the throughput
scaling and fundamental operating regimes are the same as those in
the dense network.

\section{Conclusion} \label{SEC:CONCL}

This paper has analyzed throughput scaling laws of ad hoc networks
equipping directional antennas, each of which has a scalable beam
width $\theta$ with respect to the number of nodes, $n$. To fully
utilize the characteristics of directional antennas, the elastic
routing protocol was proposed, where per-hop distance is increased
elastically according to $\theta$ while the average received SINR
is maintained as a constant. Under the proposed routing protocol,
fundamental operating regimes with respect to $\theta$ and the
corresponding throughput scaling laws were identified. It was
proved that the elastic routing protocol exhibits a much higher
throughput scaling result compared to the conventional multihop
scheme as $\theta^{-1}$ increases. Moreover, our result was
generalized to the hybrid network scenario with infrastructure.
The impacts and benefits of the elastic routing protocol were
comprehensively analyzed in further improving throughput scaling
laws in the hybrid network.


\end{document}